\pgfplotsset{compat=1.18}
\DeclareMathOperator*{\argmin}{arg\, min}
\newtheorem{theorem}{Theorem}[section]
\newtheorem*{theorem*}{Theorem}
\newtheorem{corollary}{Corollary}[section]
\newtheorem{lemma}{Lemma}[section]
\newtheorem{problem}{Problem}[section]
\newtheorem{proposition}{Proposition}[section]
\newtheorem{assumption}{Assumption}[section]
\newtheorem{definition}{Definition}[section]
\newtheorem{remark}{Remark}[section]
\newtheorem{example}{Example}[section]
\newcommand{\Eb}{\mathbb{E}}
\newcommand{\Pb}{\mathbb{P}}
\newcommand{\Rb}{\mathbb{R}}
\newcommand{\Bc}{\mathcal{B}}
\newcommand{\Cc}{\mathcal{C}}
\newcommand{\Fc}{\mathcal{F}}
\newcommand{\Ic}{\mathcal{I}}
\newcommand{\Nc}{\mathcal{N}}
\newcommand{\Tc}{\mathcal{T}}
\newcommand{\Vc}{\mathcal{V}}
\newcommand{\Xc}{\mathcal{X}}
\newcommand{\etat}{\tilde{\eta}}
\newcommand{\bc}{\mathbf{c}}
\newcommand{\ch}{\widehat{\bc}}
\newcommand{\rc}{\mathbf{\bar{c}}}
\newcommand{\dd}{\mathrm{d}}
\newcommand{\ee}{\mathrm{e}}
\newcommand{\mm}{\mathrm{m}}
\newcommand{\ndo}{n_{\downarrow}}
\newcommand{\nup}{n_{\uparrow}}
\begin{document}

\title{%Bonus–Malus Insurance in Oligopoly:
Optimal Underreporting and Competitive Equilibrium}

\author{Zongxia Liang\thanks{Department of Mathematical Sciences, and Center for Insurance and Risk Management, School of Economics and Management, Tsinghua University,  China. Email: liangzongxia@tsinghua.edu.cn} 
	\hspace{2ex}
	Jiayu Zhang\thanks{Department of Mathematical Sciences,  Tsinghua University,  China. Email: zjy23@mails.tsinghua.edu.cn}
	\hspace{2ex}
	Zhou Zhou\thanks{School of Mathematics and Statistics,  University of Sydney,  Australia. Email: zhou.zhou@sydney.edu.au} 
	\hspace{2ex}
	Bin Zou\thanks{Department of Mathematics,  University of Connecticut,  USA. Email: bin.zou@uconn.edu}}

\date{ \today}
\maketitle

\begin{abstract}
This paper develops a dynamic insurance market model comprising two competing insurance companies and a continuum of insureds,  and examines the interaction between strategic underreporting by the insureds and competitive pricing between the insurance companies under a Bonus-Malus System (BMS) framework. For the first time in an oligopolistic setting,  we establish the existence and uniqueness of the insureds' optimal reporting barrier,  as well as its continuous dependence on the BMS premiums. For the 2-class BMS case, we prove the existence of Nash equilibrium premium strategies and conduct an extensive sensitivity analysis on the impact of the model parameters on the equilibrium premiums. 
\end{abstract}

\noindent\textbf{Keywords:} Bonus-Malus System (BMS); Strategic underreporting; Oligopolistic competition; Nash Equilibrium; Stackelberg game

\section{Introduction}

Insurance companies commonly employ bonus–malus systems (BMS) that link future premiums to insureds’ historical claims in order to improve risk classification and to provide incentives for enhancing loss prevention,  thereby mitigating ex-ante moral hazard.  Such mechanisms allow premiums to more accurately reflect the true risk profile of the insureds.  For example,  \cite{abbring2003moral} document that claim frequency is negatively correlated with past claims and find no evidence of moral hazard, when analyzing French automobile insurance data under BMS.
The prototypical BMS operates as follows: if an insured does not file a claim in the preceding period,  the premium for the subsequent period is reduced (``bonus"); conversely,  if a claim occurs,  the premium is increased (``malus").  Within this framework,  insureds face strategic incentives to underreport losses.  Specifically,  an insured will choose not to report a loss if the expected increase in the subsequent premium caused by reporting exceeds the immediate benefit of the claim; on the other hand,  reporting is optimal if the expected premium increase is smaller than the claim benefit.  A substantial body of empirical and theoretical research has documented the existence of or studied such strategic underreporting by insureds (see, e.g., \cite{haehling1974optimal}, \cite{abbring2008better}, and \cite{robinson2010moral}, among many others). 

Building on the aforementioned phenomenon of strategic underreporting,  a natural question arises regarding the insureds’ optimal reporting decisions. Research on this topic remains relatively scarce, and we review recent progress as follows.
\cite{zacks2004claiming} employ standard dynamic programming methods to investigate the optimal reporting strategy and its impact on the insurer’s long-run average premium in multi-class BMS under risk neutrality with deductible insurance.  \cite{ludkovski2010ex} introduce a two-period model with asymmetric information,  in which the insured adopts a stochastic reporting strategy modeled as a Bernoulli random variable,  while the insurer updates beliefs about the risk type using Bayesian inference based on reported claims.  Their results show that the optimal reporting strategy for a risk-neutral insured varies across risk types,  and may involve full non-reporting,  full reporting,  or a mixed strategy.  \cite{charpentier2017optimal} provide a rigorous mathematical formulation of the claim reporting problem in discrete-time BMS settings.  Using numerical methods,  they compute the optimal reporting strategy and conduct sensitivity analyses for both the optimal strategy and underreporting probabilities.  \cite{cao2024equilibrium} derive expressions for the optimal reporting strategy for insureds holding full-coverage insurance,  providing a closed-form solution under risk neutrality and a semi-explicit form under risk aversion.  \cite{cao2024strategic} further investigate the optimal reporting strategy for insureds with deductible insurance,  presenting conditions for non-zero reporting and a semi-explicit expression,  and perform sensitivity analyses with respect to the optimal deductible. All above reviewed papers consider a discrete-time model; for optimal reporting problems in continuous-time, please refer to \cite{cao2025continuous} and \cite{cao2025optimal}. 

From the above summary, it is apparent that the existing literature on underreporting under the BMS frameworks focuses exclusively on insureds’ reporting strategies within a \emph{monopolistic} insurance market (i.e., there is only one insurance company). However, real insurance markets are typically oligopolistic,  characterized by intense competition among insurers.  This gives rise to two fundamental questions:
\begin{itemize}
    \item Question 1: How should insureds’ reporting strategies be characterized in an oligopolistic insurance market?
    \item Question 2: When insureds follow their optimal reporting strategies, how should competing insurance companies set the premiums for their individual BMS to reach (Nash) equilibrium?
\end{itemize}

To address these two questions above, we develop a discrete-time insurance market model with two competing insurance companies and a continuum of insureds. All insureds purchase full insurance but may choose either company as the provider. Both companies apply an $N$-class BMS in pricing; however, their premiums may differ over rate classes. Because factors, such as service quality and the complexity of claims procedures, generate heterogeneous preferences across insurance companies (see, e.g.,  \cite{ennew1996impact} and \cite{cummins1996capital}),  we incorporate a choice function \(\eta\) to capture such insurer-specific preferences (here, $\eta$ is a function, which takes the premium difference from the two insurers as its argument and returns a probability for choosing Company~1 over Company~2).  Following the approach of \cite{zacks2004claiming},  \cite{charpentier2017optimal}, and \cite{cao2024strategic},  we assume that insureds employ a barrier reporting strategy when making reporting decisions.  Let \(b=\{b_n^i\}_{i=1, 2, \, n=1, \cdots, N}\) denote a barrier reporting strategy, with \(b_n^i \ge 0\) being the barrier when the insured is in rate class \(n\) and buys insurance from Company \(i\). Then, such a strategy dictates the insured in  rate class \(n\) with Company \(i\) to report a loss if and only if it is greater than the barrier \(b_n^i\). 

For Question 1, we seek an optimal barrier strategy to minimize the expected total discounted expenses (i.e., insurance premiums plus hidden losses) and obtain a complete answer in Theorem~\ref{thm:b_op}, with additional results collected in Remarks~\ref{rem:b} and \ref{rem:obj}.
We show that the optimal reporting barrier \(b^*\) exists and is unique.  In addition, we derive the condition under which \(b^*\) is strictly positive,  along with its semi-explicit expression.  
The characterization of $b^*$ has a clear economic meaning: it is the difference in expected expenses (or gain in expected utility) between reporting a loss and hiding a loss.

Regarding Question 2, which, to the best of our knowledge,  has not been studied before in the  literature, we consider a 2-class BMS (i.e., $N=2$) and assume an exponential choice function $\eta$ for tractability.  In this setting,  we first obtain a finer result on the insureds' optimal reporting barrier in Theorem~\ref{thm:bstar}; next, we apply the Stackelberg game framework to study the interaction between the insurance companies and insureds. To be precise, the two insurance companies are the leaders and set the BMS premiums $\bc^i = \{c^i_n\}_{n=1, \cdots, N}$, for $i=1,2$; the insureds are the followers and choose their optimal reporting barrier $b^*(\bc^1, \bc^2)$. Both companies aim to maximize their expected (per-period) profit, taking into account the possible underreporting from the insureds. Because of the competition in the market, each insurance company's optimal premium strategy from the Stackelberg game depends on its competitor's strategy, taking the form of $\rc^1(\bc^2)$ and $\rc^2(\bc^1)$, respectively. Finally, a Nash equilibrium premium strategy $(\bc^{*,1}, \bc^{*,2})$ is defined as a fixed point of the mapping $(\bc^1, \bc^2) \mapsto (\rc^1(\bc^2), \rc^2(\bc^1))$. 
We show that such an equilibrium premium strategy exists in Theorem~\ref{thm:c_op}, when the model parameters satisfy certain regularity conditions. We provide a concrete example to demonstrate that those regularity conditions can hold under a reasonable market. 
Furthermore,  we conduct an extensive sensitivity analysis to examine how various model parameters affect the insurers’ equilibrium premiums.

%when the insured’s objective is to minimize the discounted sum of losses,  we obtain a closed-form expression for the optimal reporting barrier (see Theorem~\ref{thm:bstar}).  Because the insured’s current state depends only on whether a claim was reported in the previous period,  the resulting state process is a finite-state,  recurrent Markov chain,  from which we derive the stationary distribution \(p\) of insured's states.  Building on this foundation,  and under suitable regularity conditions,  we establish the existence of an equilibrium premium.  

The remainder of the paper proceeds as follows.  Section \ref{sec:model} introduces the model framework.  Section  \ref{sec:insured} derives the optimal reporting strategy for the insureds.  Section \ref{sec:insu} studies the insurance companies' pricing game.  Section \ref{nubs} presents numerical analyses and sensitivity results on the equilibrium premium strategies.  Section \ref{conl} concludes the paper. All proofs are placed in Appendix \ref{sec:proofs}.

\section{Model}
\label{sec:model}

We consider an insurance market consisting of two \emph{competing} insurance companies (insurers) and a continuum of insureds who seek insurance coverage on some non-life risk (such as automobile collision risk). On an infinite time horizon $T := \{1,  2,  \cdots\}$,  we model each insured's losses by a series of independently and identically distributed (i.i.d) nonnegative random variables,  $\{L_t\}_{t \in \Tc}$,  in which $L_t \overset{d}{=}L $ denotes the loss amount in the $t$-th period (over $[t-1,  t)$) and has the same distribution as a generic random variable $L$. We assume that $L$ is a mixture of a point mass at zero,  with probability $p_0 \in (0, 1)$,  and a continuously distributed positive random variable with full support over $(0, \infty)$ (see, e.g.,  \cite{haehling1974optimal} and \cite{cao2024equilibrium}). With this assumption,  the cumulative distribution function (cdf) $F_L$ of $L$ is given by 
\begin{align}
 \label{F_L}
	F_L(x) = p_0 + \int_0^x  f_L(\ell)  \dd \ell, \quad x \ge 0,
\end{align}
in which $f_L(\ell) > 0$ for all $\ell >0$ and $\lim_{\ell \rightarrow +\infty} f_L(\ell) = 0$. 
We fix a filtered probability space  $(\Omega, \Fc,  \mathbb{F},  \Pb)$ consistent with the above loss model, where $\mathbb{F}=\{\Fc_t\}_{t \in T}$ is a filtration; denote expectation taken under $\Pb$ by $\Eb$,  and any subscript of $\Eb$ indicates a conditional expectation.

The two insurance companies offer \emph{full insurance} covering the loss $L$ for the insureds,  and they each apply an $N$-class bonus-malus system (BMS) to price their policies (see, e.g.,  \cite{lemaire2012bonus} for a standard reference on BMS). Denoting the set of rate classes and the index set of companies by 
\begin{align*}
	\Nc := \{1,  2,  \cdots,  N\} \quad \text{and} \quad \Ic := \{1,  2\}, 
\end{align*}
respectively,  Company $i \in \Ic$ sets the premium,  $c^i_n > 0$, for all insureds in rate class $n \in \Nc$ over each single period. Without loss of generality,  assume that class 1 is the best rating class,  while class $N$ is the worst one; as such,  the premiums should satisfy 
\begin{align}
	\label{eq:cond_c}
	0 < c_1^i \le c_2^i \le \cdots \le c_N^i,  \quad \text{for } i \in \Ic. 
\end{align}
For every insured,  given their current rate class $n \in \Nc$,  the BMS sets their rate class in the next period by the following rule: 
\begin{align}
	\begin{cases}
		\max\{n-1, 1\}, &\text{if there is no \emph{reported} loss in the current period};\\
		\min\{n+1, N\}, &\text{otherwise}.
	\end{cases}
\end{align}
Note that the update of rate class relies on the \emph{reported},  not \emph{incurred},  loss in each period; this setting is consistent with the practice in most non-life insurance lines,  such as automobile insurance and home insurance,  because insurers cannot monitor the actual loss status of the insureds (or it is too costly for them to implement monitoring or audit records). 

Because premiums are cheaper for insureds in good rate classes,  the BMS mechanism may incentivize some insureds to deliberately hide incurred losses,  so that they remain in or get updated to good rate classes. The behavior of \emph{underreporting losses} is well documented in the insurance (see, e.g., \cite{cohen2005asymmetric} and \cite{braun2006modeling}),  and related theoretical studies show that the optimal reporting decision is to employ a \emph{barrier} strategy (see, e.g.,  Remark 2.1 in \cite{cao2025optimal}). Following this strand of literature,  we assume that insureds adopt a barrier strategy to decide whether they should report an incurred loss,  which we explain in detail as follows. Let $X = \{X_t\}_{t \in T}$ denote the state process of a representative insured,  where 
\begin{align}
	X_t \in \Xc := \{(n,  i) | n \in \Nc,  \,  i \in \Ic \}
\end{align}
records the insured's rate class $n$ and their insurance provider,  Company $i$,  in the $t$-th period; note that $X_t \in \Fc_{t-1}$ is known at the beginning of the $t$-th period,  for all $t \in T$. A barrier reporting strategy $b$ is a (time-homogeneous) function,  mapping every $x \in \Xc$ into a nonnegative number $b(x) \in \Rb_+$,  and dictates the insured to report a loss if and only if $L_t > b(X_t)$ in the $t$-th period and to hide a loss otherwise.

Recall that there are two competing insurance companies in the market,  both offering the same insurance coverage,  but possibly at different premiums. In theory,  insureds should simply go with the cheaper supplier; however,  empirical evidence suggests that practical factors,  such as service quality,  company reputation,  and loyalty,  may deter insureds from selecting the insurer with the lowest premium (see, e.g., \cite{cummins1996capital} and \cite{ennew1996impact}). With this in mind,  we propose a random model for the switching of insureds from one company to the other,  in which the transition probability depends on the difference of the premiums charged by the two companies and is independent of the loss. To be precise,  let $\eta : \Rb \mapsto [0, 1]$ be a continuous, nondecreasing and almost everywhere differentiable function; the probability that an insured in rate class $n$ chooses Company $1$   is given by $\eta(c_n^2 - c_n^1)$,  for all $n \in \Nc$. The nondecreasingness of $\eta$ captures the fact that the bigger the premium gap,  the more likely that insureds will switch to the cheaper company. One may impose further conditions on $\eta$,  such as $\lim\limits_{\Delta c \to +\infty} \eta(\Delta c)= 1$ and $\lim\limits_{\Delta c \to -\infty} \eta(\Delta c)= 0$. To fully model the state transition of insureds, we introduce the following function $\etat$:
\begin{equation}
\label{eq:eta}
\etat(c_n^i-c_n^j)=
\begin{cases}
1-\eta(c_n^2-c_n^1),  & i=1, \\
\eta(c_n^2-c_n^1),  & i=2, 
\end{cases}
\end{equation}
for all $n\in \Nc$ and $i, j \in \Ic$ with $i \ne j$.

Based on the above setup,  the insured's state $X$ follows a Markov chain,  with the transition probabilities given by
\begin{equation}
	\label{eq:P}
	\begin{split}
\Pb \left( X_{t+1}=((n-1)\vee 1, i)\, |\, X_{t}=(n, i) \right) &= \Pb (L_t\leq b(X_t)) \cdot \left(1-\etat \Big(c_{(n-1)\vee 1 }^i-c_{(n-1)\vee1}^j \Big) \right),  \\
\Pb \left( X_{t+1}=((n-1)\vee 1,  j)\, |\, X_{t}=(n, i) \right) &= \Pb (L_t\leq b(X_t)) \cdot \etat \Big(c_{(n-1)\vee 1}^i-c_{(n-1)\vee1}^j \Big),  \\
\Pb \left( X_{t+1}= ((n+1)\wedge N, i) \, |\, X_{t}=(n, i) \right) &= \Pb (L_t > b(X_t)) \cdot  \left(1-\etat \Big(c_{(n-1)\vee 1}^i-c_{(n-1)\vee 1}^j \Big) \right),  \\
\Pb \left( X_{t+1}= ((n+1)\wedge N,  j ) \, |\, X_{t}=(n, i) \right) &= \Pb (L_t > b(X_t)) \cdot  \etat \Big(c_{(n-1)\vee 1}^i-c_{(n-1)\vee 1}^j \Big) ,  
\end{split}
\end{equation}
for all $t \in T$,  $n \in \Nc$,  and $i \neq j \in \Ic$,  in which $m_1 \vee m_2 := \max\{m_1,  m_2\}$ and $m_1 \wedge m_2 := \min \{m_1,  m_2\}$ for all $m_1,  m_2 \in \Rb$. To reduce notational burden,  we introduce the following short-handed notation:
\begin{align}
	\label{eq:n}
	\ndo := (n-1)\vee 1 \quad \text{and} \quad \nup := (n+1)\wedge N,  \quad n \in \Nc, 
\end{align}
where the direction of the arrow indicates when the rate class decreases (better rating) or increases (worse rating),  and 
\begin{align}
	\label{eq:dc}
	\Delta c_n^{ij} := c_n^i - c_n^j,  \quad n \in \Nc,  i,  j \in \Ic,  i \neq j, 
\end{align}
denotes the premium difference on rate class $n$ between Company $i$ and Company $j$.

To study the decision-making of both the insureds and insurance companies,  we follow the Stackelberg game framework to account for both parties' interest; see, e.g., \cite{chen2018new},  \cite{li2021bowley},  \cite{cao2022stackelberg},  and \cite{boonen2023bowley},  among many others on the application of this game model in actuarial science. In our game model,  we assume that the insureds are the followers and choose their barrier reporting strategy $b$,  and that the insurance companies are the leaders and set the premiums $\bc^i = \{c^i_n\}_{n \in \Nc}$,  $i \in \Ic$. For every premium pair $(\bc^1,  \bc^2)$ set by the leaders (insurance companies),  the insureds seek an optimal barrier strategy $b^*:= b^*(\bc^1,  \bc^2)$,  which depends on $(\bc^1,  \bc^2)$,  to minimize the discounted total cost (premiums plus hidden losses); we analyze the insureds' problem in Section \ref{sec:insured}. Knowing the response $b^*(\bc^1,  \bc^2)$ in loss reporting to their premium strategies,  the insurance companies aim to maximize their expected profit,  and their competition is settled via a Nash game,  yielding the equilibrium premiums $(\bc^{*,1},  \bc^{*,2})$,  which in turn leads to the equilibrium reporting strategy $b^*(\bc^{*,1},  \bc^{*,2})$. The insurance companies' equilibrium premium problems are solved in Section \ref{sec:insu}.

 \section{The insureds' optimal reporting problem}
 \label{sec:insured}
 
In this section,  we formally introduce,  and then solve,  the insureds' optimal reporting problem,  when the premiums $(\bc^1,  \bc^2)$ are given from the insurance companies. The optimal barrier strategy $b^*:= b^*(\bc^1,  \bc^2)$ is obtained in Theorem \ref{thm:b_op}. 

In the analysis,  we consider a representative insured whose initial state $X_0$,  though fixed,  can take any value in the feasible set $\Xc$. Because the state space $\Xc$ is finite,  a barrier strategy $b$ (as a mapping) is equivalent to an $N \times 2$ matrix $\{b_n^i\}_{n \in \Nc,  i \in \Ic}$,  with the one-to-one correspondence $b(x = (n, i)) = b_n^i \ge 0$. For convenience,  we write $b = \{b_n^i\}_{n \in \Nc,  i \in \Ic}$ as a barrier strategy hereafter,  and we denote by $\Bc$ the set of all admissible barrier strategies, that is,
\begin{align}
    \label{B}
    \Bc = \bigl\{
(a_{ni}) \in \mathbb{R}_{+}^{N\times 2}
\;\big|\;
a_{n i} \ge 0, 
\ \forall\,  n\in \Nc, i \in \Ic
\bigr\}.
\end{align}
Let $\Cc$ denote the set of feasible premiums (formally defined by \eqref{eq:C} later). We formulate the insured's optimal reporting problem below. 

\begin{problem}
	\label{prob:cus}
	Given the premium pair $(\bc^1,  \bc^2) \in \Cc^2$ set by the two insurance companies,  the representative insured seeks an optimal barrier strategy $b^* := b^*(\bc^1,  \bc^2)$ that  minimizes the discount total cost 
	\begin{align}
		\label{eq:b_op_def}
		b^* = \argmin_{b \in \Bc} \,  \Eb \left[\sum_{t=1}^\infty \delta^t \left(c(X_{t}) + L_{t} \cdot \mathbf{1}_{\{L_{t} \leq b(X_t)\}} \right) \Big| X_1 \in \Xc \right] := \argmin_{b\in\Bc}V(X_1; b), 
	\end{align}
where  $\delta \in (0,1)$ is a discounting factor,  $\mathbf{1}_D$ is the indicator function of a set $D$, 
\begin{align}
	c(X_t) = c_{n}^{i} \quad \text{and} \quad b(X_t) = b_{n}^{i},  \quad \text{ for all } X_t = (n,  i) \in \Nc \times \Ic.
\end{align}
We call  $V^*(n,  i) := \inf_{b\in\Bc}V(n,  i; b)$  the \emph{value function} for the initial state $X_1 = (n,  i) \in \Xc$.
\end{problem}

We briefly explain the insured's problem as follows. At time $t - 1$,  the insured's state $X_t = (n,  i) \in \Nc \times \Ic$ is observable; that is,  they are in rate class $n \in \Nc$ and buy insurance for the $t$-th period from Company $i \in \Ic$,  with both known at the beginning of the $t$-th period. The total cost in the $t$-th period is the sum of the insurance premium $c(X_t)$ and the hidden loss $L_t$,  should it fall below the reporting barrier $b(X_t)$. 
%In practice,  premiums are paid at the beginning of each period,  while given the discrete-time model,  losses are settled at the end of each period; however,  for convenience reason,  we directly add the premium and hidden loss together. Indeed,  one may treat $c$ as the ``nominal'' premium and $\delta \cdot c$ as ``real'' premium. 
In regardless of their initial state $X_1$,  the insured may switch to a different company at any time and reach any rate class over the infinite horizon;  as such, their reporting decision must account for the entire state space $\Xc$,  not just for the initial state. This explains why every reporting strategy $b$ is a full matrix $\{b_n^i\}$ covering all possible $(n,  i) \in \Xc$. As the notation suggests,  we expect the insured's optimal strategy $b^*$ to depend on the premium pair  $(\bc^1,  \bc^2)$,  and if such dependence relation needs to be emphasized,  we write it as $b^*(\bc^1,  \bc^2)$.

We show that Problem \ref{prob:cus} admits a unique solution $b^*$,  and it satisfies certain continuity condition.

 \begin{theorem}\label{thm:b_op}
 For all $\bc:= (\bc^1,  \bc^2) \in \Cc^2$,  there exists a unique optimal barrier strategy $b^* = \{ b^{*, i}_n(\bc)\}_{n \in \Nc,  i \in \Ic}$ to the insured's reporting problem in \eqref{eq:b_op_def},  and the mapping,  ${\bf c}\mapsto b^{*, i}_n(\bc)$,  is continuous. 
 Moreover,  we have the following characterization of the optimal barrier strategy:
 \begin{align}
 	b^{*, i}_n &= 0 \vee \delta\Bigl[  \left( 1 - \etat \big( \Delta c^{ij}_{\nup}\big)\right) V^*(\nup,  i) + \etat \big( \Delta c^{ij}_{\nup}\big) V^*(\nup,  j)  \notag \\
    &\quad -\left( 1 - \etat \big( \Delta c^{ij}_{\ndo}\big)\right) V^*(\ndo,  i) 
 	- \etat \big( \Delta c^{ij}_{\ndo}\big) V^*(\ndo,  j)  \Bigr] := 0 \vee \varphi (n,  i), 
 	\label{eq:varphi}
 \end{align}
in which $j = 3 - i \in \Ic$,  $\ndo$ and $\nup$ (rating up or down by one class) are defined in  \eqref{eq:n},  and  $\Delta c_n^{ij}$ (premium difference) is given by \eqref{eq:dc}. 
 \end{theorem}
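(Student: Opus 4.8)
The plan is to treat Problem~\ref{prob:cus} as a discounted Markov decision process on the finite state space $\Xc=\Nc\times\Ic$, whose action at a state $(n,i)$ is the scalar barrier $a=b^i_n\in\Rb_+$. The structural fact that drives the whole proof is that the barrier attached to $(n,i)$ enters the total cost in \eqref{eq:b_op_def} only through the hidden-loss term $L\,\mathbf 1_{\{L\le a\}}$ paid while the insured occupies $(n,i)$ and through the one-step transition out of $(n,i)$ in \eqref{eq:P}; it affects neither the costs nor the transitions at any other state. Consequently the problem decouples state by state and $V^*$ obeys a Bellman equation. With $j=3-i$, writing
\[
\Phi^*_{\downarrow}(n,i):=\bigl(1-\etat(c^i_{\ndo}-c^j_{\ndo})\bigr)\,V^*(\ndo,i)+\etat(c^i_{\ndo}-c^j_{\ndo})\,V^*(\ndo,j)
\]
for the expected continuation value after a period with no reported loss, and $\Phi^*_{\uparrow}(n,i)$ for the same expression with $\nup$ in place of $\ndo$ (i.e.\ after a period with a reported loss), the value function satisfies, for every $(n,i)\in\Xc$,
\[
V^*(n,i)=c^i_n+\min_{a\ge 0}\Bigl\{\Eb\bigl[L\,\mathbf 1_{\{L\le a\}}\bigr]+\delta\,F_L(a)\,\Phi^*_{\downarrow}(n,i)+\delta\,\bigl(1-F_L(a)\bigr)\,\Phi^*_{\uparrow}(n,i)\Bigr\},
\]
with $F_L$ as in \eqref{F_L}; the placement of the discount factor $\delta$ is the one consistent with \eqref{eq:varphi}.

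Existence and uniqueness of $V^*$ would follow from a contraction argument. Let $\Tc:\Rb^{\Xc}\to\Rb^{\Xc}$ be the operator sending $W$ to the right-hand side above with $W$ substituted for $V^*$. For fixed $W$ the bracketed function of $a$ is continuous and (as the next step shows) attains its minimum at a finite point, so $\Tc$ is well defined; and because the two continuation terms are convex combinations of coordinates of $W$ (the weights $\etat,1-\etat$ lie in $[0,1]$) and $a\mapsto\min(\cdot,\cdot)$ is $1$-Lipschitz, $\|\Tc W_1-\Tc W_2\|_\infty\le\delta\,\|W_1-W_2\|_\infty$. Banach's fixed-point theorem then gives a unique $V^*$ with $\Tc V^*=V^*$. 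A verification argument identifies this $V^*$ with the value function and yields the optimal strategy: $\Tc$ is monotone, and for any $b\in\Bc$ one has $V(\cdot;b)=\Tc_bV(\cdot;b)\ge\Tc V(\cdot;b)$, where $\Tc_b$ freezes the barrier to $b$ and is itself a $\delta$-contraction with fixed point $V(\cdot;b)$; iterating $\Tc$ downward from $V(\cdot;b)$ and using $\Tc^kV(\cdot;b)\to V^*$ gives $V(\cdot;b)\ge V^*$, while the barrier $b^*$ that attains the minimum coordinatewise in the Bellman equation satisfies $\Tc_{b^*}V^*=\Tc V^*=V^*$, so $V(\cdot;b^*)=V^*$. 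Hence $b^*$ is optimal simultaneously for every initial state.

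The explicit characterization \eqref{eq:varphi} and the uniqueness of $b^*$ both come from the scalar minimization inside the Bellman equation, which is the engine of the argument. Fix $(n,i)$ and put $\varphi(n,i):=\delta\bigl(\Phi^*_{\uparrow}(n,i)-\Phi^*_{\downarrow}(n,i)\bigr)$, which is exactly the bracket in \eqref{eq:varphi}. By \eqref{F_L}, $\Eb[L\,\mathbf 1_{\{L\le a\}}]=\int_0^a\ell f_L(\ell)\,\dd\ell$, so the bracketed objective equals $\int_0^a\ell f_L(\ell)\,\dd\ell-F_L(a)\,\varphi(n,i)+(\text{a term independent of }a)$; it is therefore differentiable on $(0,\infty)$ with derivative $f_L(a)\bigl(a-\varphi(n,i)\bigr)$. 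Since $f_L>0$ on $(0,\infty)$ by assumption, this derivative is strictly negative on $\bigl(0,\varphi(n,i)\bigr)$ and strictly positive on $\bigl(\varphi(n,i),\infty\bigr)$ when $\varphi(n,i)>0$, and strictly positive on all of $(0,\infty)$ when $\varphi(n,i)\le 0$; in every case the objective has the unique minimizer $a^*=0\vee\varphi(n,i)$, which is \eqref{eq:varphi}. Uniqueness of the optimal barrier strategy then follows: if $b'$ is also optimal, then $V(\cdot;b')=V^*$, and since $V(\cdot;b')$ is the fixed point of $\Tc_{b'}$ this gives $\Tc_{b'}V^*=V^*$; comparing with $\Tc V^*=V^*$ forces each coordinate $(b')^i_n$ to attain the scalar minimum, hence $(b')^i_n=0\vee\varphi(n,i)=b^{*,i}_n$.

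For the continuity of $\bc\mapsto b^{*,i}_n(\bc)$, note that $\Tc=\Tc_{\bc}$ depends on $\bc$ only through the premiums $c^i_n$ (entering additively) and through the continuous maps $\bc\mapsto\etat(c^i_n-c^j_n)$, so $\bc\mapsto\Tc_{\bc}W$ is continuous for every fixed $W$ while the contraction modulus $\delta$ is uniform in $\bc$; the standard bound $\|V^*(\bc_1)-V^*(\bc_2)\|_\infty\le(1-\delta)^{-1}\|\Tc_{\bc_1}V^*(\bc_2)-\Tc_{\bc_2}V^*(\bc_2)\|_\infty$ then yields continuity of $\bc\mapsto V^*(\bc)$, and since $\varphi(n,i)$ is a continuous function of $V^*(\bc)$ and of the $\etat(c^i_n-c^j_n)$, the map $\bc\mapsto b^{*,i}_n=0\vee\varphi(n,i)$ is continuous. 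I expect the main obstacle to be the verification step --- confirming that the fixed point of $\Tc$ really is the infimum over \emph{all} barrier strategies and that the coordinatewise minimizer is globally optimal --- which is precisely where the decoupling structure noted at the outset is used; a secondary technical point is simply ensuring the inner minimum is attained so that $\Tc$ is well defined without any a priori bound on the loss $L$ or on the barrier, and this is taken care of by the $f_L>0$ sign analysis.
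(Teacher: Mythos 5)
Your proposal is correct and follows essentially the same route as the paper: write the Bellman equation, show the Bellman operator is a $\delta$-contraction so Banach's fixed-point theorem yields a unique $V^*$, read off $b^{*,i}_n=0\vee\varphi(n,i)$ from the sign of the derivative $f_L(a)\bigl(a-\varphi(n,i)\bigr)$ of the decoupled scalar objective, and deduce continuity in $\bc$ from the standard bound $\|V^*(\bc_1)-V^*(\bc_2)\|\le(1-\delta)^{-1}\|\Tc_{\bc_1}V^*(\bc_2)-\Tc_{\bc_2}V^*(\bc_2)\|$. The only substantive addition is your explicit verification step (monotonicity of $\Tc$, the comparison $V(\cdot;b)\ge V^*$ for every $b$, and the identification of the coordinatewise minimizer as globally optimal), which the paper asserts without proof; your chosen normalization of the discount factor is the one that makes the stated formula \eqref{eq:varphi} come out, consistent with the paper.
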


\begin{remark}
	\label{rem:b}
By Theorem \ref{thm:b_op}, $b_n^{*,i} > 0$ if and only if $\varphi(n, i)>0$, and this result is intuitive because $\varphi$ in \eqref{eq:varphi} can be interpreted as the ``difference in cost'' between hiding a loss and reporting it. By recalling the definition of $\etat$ in \eqref{eq:eta} and using \eqref{eq:varphi}, we get $\varphi(n, 1) = \varphi(n, 2)$ for all $n \in \Nc$. As such, the insureds' optimal barrier strategy is independent of the insurance companies from which they purchase insurance; that is, 
\begin{align}
	\label{eq:same_b}
	b_n^{*,1} = b_n^{*,2} \quad \text{for all } n \in \Nc.
\end{align}
Because of this symmetry result, we write $b^* = \{b^*_n\}_{n \in \Nc}$, without the company index $i \in \Ic$, as the insureds' optimal barrier strategy. 

In the special case of two rate classes ($N=2$), it is obvious from the definition of $\varphi(n,i)$ that $b_1^* = b_2^*$ (that is, the optimal barrier is the \emph{same} for both rate classes).
\end{remark}

 \begin{remark}
 	\label{rem:obj}
    The optimization objective presented in Problem~\ref{prob:cus} aims to minimize the insured's discounted total expenses, and it is widely used in the literature on loss (under)reporting (see, e.g., \cite{haehling1974optimal}, \cite{robinson2010moral}, and \cite{charpentier2017optimal}). The linearity in the objective (see \eqref{eq:b_op_def}) simplifies the optimization problem and helps obtain an analytical solution to the optimal reporting strategy. From the perspective of expected utility theory (EUT), this linear preference is equivalent to assuming that insureds are \emph{risk-neutral}; alternatively, one may apply a utility function to the insureds' wealth (expenses), as suggested by the standard EUT. \cite{cao2024equilibrium} consider both risk-neutral and risk-averse insureds in their study; see Sections 3 and 4 therein for comparison. 
    
    Suppose that insureds are risk-averse in the sense that they apply a (strictly increasing) utility function $U$ to the expenses in each period. Instead of \eqref{eq:b_op_def}, they now solve the following reporting problem:
\begin{align*}
		%\label{eq:b_op_def_U}
		\tilde{b}^* = \arg\min_{b \in \Bc} \,  \Eb \left[\sum_{t=1}^\infty \delta^t \, U\left(c(X_{t}) + L_{t} \cdot \mathbf{1}_{\{L_{t} \leq b(X_t)\}} \right) \Big| X_1 \in \Xc \right]:=\tilde{V}(X_1;b) .
	\end{align*}
It is pleasing to report that all the results in Theorem \ref{thm:b_op} hold in a parallel way. Indeed, the above $\tilde{b}^*$ is unique, and it is given by 
 \begin{equation}
 	\label{eq:b_op_U}
     \tilde{b}_n^{*,i} = 0\vee \left(U^{-1}\left[\tilde{\varphi}(n, i)+U(c_n^i)\right]-c_n^i\right),  \quad n\in \Nc, i \in \Ic,
 \end{equation}
where $\tilde{\varphi} (n,  i)$ is defined similarly to $\varphi(n,i)$ with $V^*$ replaced by the new value function $\tilde{V}^*$.
 \end{remark}

\section{The insurance companies' pricing game}
\label{sec:insu}

We obtain the insureds' optimal barrier strategy $b^*(\bc^1, \bc^2)$ for reporting losses in Theorem \ref{thm:b_op}, in response to the premium strategies $(\bc^1, \bc^2)$ from two competing insurance companies. Knowing this optimal response function $b^*(\bc^1, \bc^2)$, the two insurance companies set their own premiums for the BMS to maximize the expected profit from the insurance business. As described in the setup, this pricing game is modeled as a two-step Stackelberg-Nash game: in the first step, Company $i$ ($i=1,2$) solves its respective Stackelberg game with the insureds to obtain $\rc^i(\bc^{3-i})$, when its competitor adopts strategy $\bc^{3-i}$; in the second step, the two companies set the Nash equilibria $(\bc^{*,1}, \bc^{*,2})$ as their final premiums. The goal of this section is to study this pricing game.

We first define the set of admissible premiums for each insurance company as follows:
\begin{align}
	\label{eq:C}
	\Cc = \left\{ (c_1,  \cdots,  c_N) \in \Rb_+^N : c_1 \le c_2 \le \cdots \le c_N \text{ and } c_n \in [\Eb[L],  M] \text{ for all } n \in \Nc \right\}, 
\end{align}
in which $M > 0$ is a constant. In the above definition,  the first condition is exactly \eqref{eq:cond_c},  as required by the ranking of all rate classes; in the second condition,  $c_n \ge \Eb[L]$ is the so-called nonnegative loading condition (insurance companies ruin for sure if this fails),  while $M$ is an upper bound on the premium so that insureds prefer purchasing full insurance to self-insurance.

For every admissible premium pair ${\bc}=(\bc^1, \bc^2) \in \Cc^2$, let $b^*(\bc)$ denote the insureds' optimal barrier reporting strategy as characterized by \eqref{eq:varphi}. Recall from \eqref{eq:same_b} that $b^*(\bc)$ is the same for insureds of both insurance companies. We obtain Company $i$'s expected per-period profit from premium strategy $\bc^i = \{c_n^i\}_{n \in \Nc}$, when Company $j$ adopts $\bc^j$, by 
\begin{align}
	\label{eq:f_obj}
	J^i({\bf c}^i;{\bf c}^j):=\Eb\left[\sum_{n=1}^N \left(c_n^i-L\cdot \mathbf{1}_{\{L > b_n^*(\bc)\}}\right)p(n, i; \bc)\right], \quad i = 1, 2, \, j = 3-i,
\end{align}
   where $p(n,i;\bc)$ denotes the long-run proportion of insureds in rate class $n\in\Nc$ who are insured
with Company $i$, given the premium pair $\bc$. Under the optimal barrier reporting strategy $b^*$,
the insureds’ state process evolves as a time-homogeneous Markov chain on a finite state space. 
Standard Markov chain arguments then imply that the cross-sectional distribution of insureds converges to a unique stationary distribution. Let $\Tc_{b^*}$ denote the corresponding transition
matrix. The stationary distribution vector $p$ is characterized as the solution to
\[
(I - \Tc_{b^*}^\top)p = 0,
\]
together with the normalization condition $\mathbf{1}^\top p = 1$. Consequently, $p(n,i;\bc)$
represents the steady-state proportion of insureds in class $n$ with Company $i$, endogenously
induced by the optimal reporting behavior and the premium pair $\bc$.

\begin{definition}
\label{def_c}
For every fixed $\bc^2 \in \Cc$, denote $ \rc^1 (\bc^2) := \arg\sup\limits_{\bc^1 \in \Cc}\, J^1(\bc^1;\bc^2)$ as Company 1's optimal premium strategy; for every fixed $\bc^1 \in \Cc$, denote $ \rc^2 (\bc^1) := \arg\sup\limits_{\bc^2 \in \Cc}\, J^2(\bc^2;\bc^1)$ as Company 2's optimal premium strategy. A premium pair $(\bc^{*,1}, \bc^{*,2})$ is called an equilibrium pricing strategy if it is a fixed point of the mapping $(\bc^1, \bc^2) \mapsto (\rc^1(\bc^2), \rc^2(\bc^1))$.
\end{definition}

The equilibrium pricing strategy $(\bc^{*,1}, \bc^{*,2})$ is defined as an equilibrium to a (non-cooperative) Nash game, and such a definition is well recognized in the economics literature. However, finding an analytical solution for $(\bc^{*,1}, \bc^{*,2})$ under a general BMS is extremely challenging. Indeed, several recent works that study loss underreporting in discrete time only obtain analytical results when the BMS has two classes (see, e.g., \cite{cao2024equilibrium}). This motivates us to study the insurance companies' pricing game for a 2-class BMS. To improve readability, we place all the proofs in Appendix \ref{sec:proofs}.

In the rest of this section, we assume that the BMS has two rate classes (with $N = 2$), and Class 1 is the ``good'' class, while Class 2 is the ``bad'' class. We further assume that both companies apply the same proportional penalty to the bad class; that is, for all $\bc^1 = \{c_1^1, c_2^1\} \in \Cc$ and $\bc^2 = \{c_1^2, c_2^2\} \in \Cc$, there exists a common penalty $\kappa \in (1,2)$ such that 
\begin{align}\label{k}
	c_2^1 = \kappa \, c^1_1 \quad \text{and} \quad  c_2^2 = \kappa \, c_1^2 .
\end{align} 
(We impose an upper bound of 2 on the penalty $\kappa$ for practical reasons.)
For notational convenience, we set 
\begin{align*}
	\theta_1 := c^1_1 \quad \text{and} \quad \theta_2 := c_1^2,
\end{align*}
and with the above assumption, every $\bc^1$ (or $\bc^2$) is uniquely linked with a one-dimensional parameter $\theta_1$ (or $\theta_2$), which takes values from the following set (as a result of \eqref{eq:C}):
\begin{align}
 \label{Theta}
	\Theta := \{ \theta \in \Rb_+ : \theta \in [\Eb[L], \, M/ \kappa] \}.
\end{align}

In the model of Section \ref{sec:model}, the probability that an insured chooses Company 1 over Company 2 is captured by a general, nondecreasing function $\eta$, which yields the probability of switching companies in \eqref{eq:eta}. Here, we assume that $\eta$ is of the following exponential form:
\begin{align}\label{k1k2}
	\eta(\theta_2 - \theta_1) = \begin{cases}
		k_2 \, \ee^{k_1(\theta_2 - \theta_1)},  & \theta_1 > \theta_2,  \\
		1-(1-k_2) \, \ee^{-k_1(\theta_2 - \theta_1)},   & \theta_1 \le \theta_2,
	\end{cases}
\end{align}
for all $\theta_1, \theta_2 \in \Theta$, where  the  $\eta$, $k_2 \in (0, 1)$ is the probability that insureds choose Company 1 when two companies offer the same premiums, and $k_1 \in (0,1)$ represents the sensitivity degree of insureds to the premium difference on the choice of an insurance provider. Note that a larger value of $k_1$ implies that insureds are more likely to choose the cheaper provider.

We summarize the specifications of the above 2-class BMS as follows.
\begin{assumption}
	\label{asu:2-class}
	The BMS has two rate classes ($n = 1, 2$), and the premium strategies of two insurance companies, $\bc^1 = \{c_1^1, c_2^1\} \in \Cc$ and $\bc^2 = \{c_1^2, c_2^2\} \in \Cc$, take the form of  
	\begin{align}
		\label{eq:theta}
		c_1^1 = \theta_1, \, c_2^1 = \kappa \theta_1, \text{ and }
		c_1^2 = \theta_2, \, c_2^2 = \kappa \theta_2, 
	\end{align}
   in which $\kappa \in (1,2)$ and $\theta_1, \theta_2 \in \Theta$ (implying that $\bc^i$ is one-to-one with $\theta_i$, $i = 1, 2$). Given the premiums $\theta_1$ and $\theta_2$ on Class 1, the probability that insureds choose Company 1 over Company 2 is given by the $\eta$ function in \eqref{k1k2}.
\end{assumption}

There are four states for the insured's Markov state $X = (n, i)$, denoted by 
\begin{align}
	\label{eq:four_states}
	s_1 = (1,1), \; s_2 = (2,1), \; s_3 = (1, 2), \; s_4 = (2,2),
\end{align}
in which $n$ records their current rate class, and Company $i$ is their insurance provider. Given Assumption~\ref{asu:2-class}, we can strength the result of Theorem \ref{thm:b_op} and obtain the insured's optimal reporting strategy in closed form.

\begin{theorem}
	\label{thm:bstar}
	Let Assumption~\ref{asu:2-class} hold and $(\theta_1, \theta_2) = (c_1^1, c_1^2) \in \Theta^2$ be given. The unique optimal barrier reporting strategy $b^*(\theta_1, \theta_2)$ for Problem~\ref{prob:cus} is the same across all four states $s_j$ ($j=1,2,3,4$), and it is given by 
	\begin{align}
		b^*(\theta_1, \theta_2) = \delta  (\kappa - 1) \left[ \theta_1 \cdot \eta(\theta_1 - \theta_2) + \theta_2 \cdot \big( 1 - \eta(\theta_1 - \theta_2) \big)\right] > 0 .
	\end{align}
\end{theorem}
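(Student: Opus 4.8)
The plan is to feed the explicit value function $V^*$ into the characterization of Theorem~\ref{thm:b_op}; for a $2$-class BMS that value function is almost trivial to compute once the right structural feature is noticed. That the optimal barrier is identical across the four states $s_1,\dots,s_4$ requires no new work: Remark~\ref{rem:b} already gives $b_n^{*,1}=b_n^{*,2}$ for every $n$ (independence of the insurer) and, when $N=2$, also $b_1^{*}=b_2^{*}$ (independence of the rate class). Hence it remains to evaluate $\varphi(1,1)$ from \eqref{eq:varphi} in closed form and to verify that it is strictly positive, which then gives $b^*=0\vee\varphi(1,1)=\varphi(1,1)$.

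The key observation is that, under the (state-independent) optimal barrier $b^*$, the one-step transition matrix $\Tc_{b^*}$ on $\{s_1,s_2,s_3,s_4\}$ has \emph{identical rows}. Indeed, for $N=2$ we have $n_\downarrow\equiv 1$ and $n_\uparrow\equiv 2$, so the rate class of $X_{t+1}$ is Class~$1$ with probability $\Pb(L\le b^*)$ and Class~$2$ with probability $\Pb(L> b^*)$, irrespective of the current rate class; and by \eqref{eq:eta}--\eqref{eq:P} under Assumption~\ref{asu:2-class}, the insurer component of $X_{t+1}$ is Company~$1$ with one and the same probability $w$ regardless of the current state --- whichever insurer the insured currently holds, the event ``next period with Company~$1$'' has probability $w$, and this $w$ is common to both rate classes because the governing premium gap is the Class~$1$ gap, so that, by \eqref{k1k2}, $w$ is the corresponding value of $\eta$. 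Since $b^*$ is fixed, $X$ is a time-homogeneous Markov chain and $V^*=V(\,\cdot\,;b^*)$ solves the linear fixed-point relation $V^*=\mathbf g+\delta\,\Tc_{b^*}V^*$, whose state-cost vector $\mathbf g$ depends on the state $(n,i)$ only through $c_n^i$ (its other component being the state-independent quantity $\Eb[L\,\mathbf 1_{\{L\le b^*\}}]$). Because $\Tc_{b^*}$ has equal rows, $\delta\,\Tc_{b^*}V^*$ is a \emph{constant} vector, so $V^*(n,i)$ depends on $(n,i)$ only through an affine function of $c_n^i$; in particular $V^*(2,i)-V^*(1,i)$ is a fixed positive multiple of $c_2^i-c_1^i=(\kappa-1)\,\theta_i$ for $i=1,2$.

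Substituting this into \eqref{eq:varphi} for $\varphi(1,1)$ and using that the Company-$1$/Company-$2$ weights equal $w$ and $1-w$ in \emph{both} the Class-$2$ and the Class-$1$ groups of terms, the state-independent part cancels out of every difference; collecting terms and using $c_2^i=\kappa c_1^i$ then yields exactly
\begin{align*}
	b^*(\theta_1,\theta_2)\;=\;\varphi(1,1)\;=\;\delta(\kappa-1)\big[\,\theta_1\,\eta(\theta_1-\theta_2)+\theta_2\,\big(1-\eta(\theta_1-\theta_2)\big)\,\big].
\end{align*}
Strict positivity is then immediate: $\kappa\in(1,2)$ makes $\kappa-1>0$; $k_2\in(0,1)$ together with \eqref{k1k2} makes $\eta(\theta_1-\theta_2)\in(0,1)$; and $\theta_1,\theta_2\in\Theta$ forces $\theta_1,\theta_2\ge\Eb[L]>0$, so the bracket is a strictly positive convex combination of $\theta_1$ and $\theta_2$. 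Therefore $\varphi(1,1)>0$ and $b^*=0\vee\varphi(1,1)=\varphi(1,1)>0$. The one step that genuinely deserves care is the verification that $\Tc_{b^*}$ has equal rows --- that, for $N=2$, both the BMS rate dynamics and the insurer-choice dynamics forget the current state --- after which the remainder is a short, routine computation.
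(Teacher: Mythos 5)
Your proposal is correct and follows essentially the same route as the paper's proof: state-independence of the barrier from Remark~\ref{rem:b}, the observation that for $N=2$ the transition matrix $\Tc_{b^*}$ has identical rows so the continuation value $(\mathcal{T}_{b^*}V^*)(s_j)$ is constant across states, reduction of the differences $V^*(2,i)-V^*(1,i)$ to one-step premium differences $(\kappa-1)\theta_i$, and substitution into \eqref{eq:varphi}. If anything, your justification of the identical-rows step (via \eqref{eq:P} and the fact that the insurer-choice probability is governed by the Class-1 premium gap) is more explicit than the paper's, and your added verification of strict positivity is a small but welcome completion of the ``$>0$'' claim that the paper leaves implicit.
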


Now knowing exactly how insureds underreport losses by the barrier strategy $b^*(\theta_1, \theta_2)$, the two insurance companies optimize their objective $J^i$ in \eqref{eq:f_obj}. We solve $\sup\limits_{\bc^1 \in \Cc} \, J^1(\bc^1; \bc^2)$ and  $\sup\limits_{\bc^2 \in \Cc} \, J^2(\bc^2; \bc^1)$ and identify the conditions under which the optimizers $\rc^1(\bc^2)$ and $\rc^2(\bc^1)$ exist. Then,  combining these conditions, we obtain the existence of the equilibrium premium strategy $(\bc^{*,1}, \bc^{*,2})$, as presented in the next theorem. Denote 
\begin{equation}
	\label{eq:m}
	\begin{split}
	\mm_1 &:= \frac{k_1}{\delta (\kappa-1)^2 k_2 [(2 A_1) \vee (2 \delta -A_1)]}, \, \hskip 1cm  A_1 := \frac{k_2 (2- \delta k_2)}{2-k_2}
     , \\
	\mm_2 &:= 
	\frac{k_1}{\delta (\kappa-1)^2 (1-k_2) [(2 A_2) \vee (2 \delta -A_2)]}, \, A_2 := \frac{(1-k_2) (2- \delta (1-k_2))}{1+k_2}.
	\end{split}
\end{equation} 

\begin{theorem}
	\label{thm:c_op}
	Let Assumption~\ref{asu:2-class} hold. There exists an equilibrium premium strategy $(\bc^{*,1}, \bc^{*,2})$ defined in Definition~\ref{def_c} if the following three conditions hold: 
	\begin{itemize}
		\item[(i)] \quad
		$\displaystyle 
		\sup_{ \ell \in I_{L}}
		%{\delta (\kappa - 1) \Eb[L] \le \ell \le \delta \frac{\kappa - 1}{\kappa} M }
		\, 
		\frac{f_L'(\ell)}{f_L(\ell)}
		\in \Big[-\frac{\ee k_1}{2\ee-1} , \, \frac{k_1}{2}\Big]$, with $I_L := \left[\delta   (\kappa-1)  \Eb[L], \,  \frac{\delta (\kappa-1) M}{\kappa} \right]$;
		
		\item[(ii)] \quad
		$\displaystyle 
			\sup_{\ell \in I_{L}}
			%{\delta (\kappa - 1) \Eb[L] \le \ell \le \delta \frac{\kappa - 1}{\kappa} M } \,
			 f_L(\ell) \le \mm_1 \wedge \mm_2$;
		
		\item[(iii)] \quad
		$\displaystyle 
		\frac{(1-\delta) k_1 M}{\kappa} \le A_1 \wedge A_2$,
	\end{itemize}
in which $\delta$ is the discount factor, $\kappa$ is the premium penalty in \eqref{k}, $M$ is the maximum on the premiums, $f_L$ is the density of the loss random variable $L$, $k_1 $and $k_2$ are from \eqref{k1k2}, and $\mm_1$, $\mm_2$, $A_1$ and $A_2$ are defined in \eqref{eq:m}.
\end{theorem}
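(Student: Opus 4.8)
The plan is to carry out the usual Stackelberg--Nash program in three stages: (a) make each company's per-period profit fully explicit as a scalar function of its own Class-1 premium; (b) show that, under conditions (i)--(iii), each company's best response is a well-defined \emph{continuous} function of the competitor's premium; and (c) conclude with a fixed-point theorem.

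First I would make the profit functions explicit. By Assumption~\ref{asu:2-class} a premium vector $\bc^i$ is encoded by the scalar $\theta_i\in\Theta$, and by Theorem~\ref{thm:bstar} the optimal barrier $b^*=b^*(\theta_1,\theta_2)$ is the \emph{same} in all four states $s_1,\dots,s_4$ of \eqref{eq:four_states} and is a smooth, explicit function of $(\theta_1,\theta_2)$; since $b^*/[\delta(\kappa-1)]$ is a convex combination of $\theta_1$ and $\theta_2$, one gets $b^*\in I_L$ for every $(\theta_1,\theta_2)\in\Theta^2$, which is exactly why (i)--(ii) are phrased as suprema over $\ell\in I_L$. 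Next, because $N=2$ forces $\ndo=1$ and $\nup=2$ from \emph{both} rate classes, and because the company-choice probability in the destination class depends only on the premiums (and not on the current company), the kernel $\Tc_{b^*}$ sends every state to the same distribution in a single step; hence the stationary vector has the product form
\[
p(1,i;\bc)=F_L(b^*)\,\eta_1^{(i)},\qquad p(2,i;\bc)=\bigl(1-F_L(b^*)\bigr)\,\eta_2^{(i)},
\]
with $\eta_k^{(1)}=\eta(c_k^2-c_k^1)$ and $\eta_k^{(2)}=1-\eta(c_k^2-c_k^1)$ for $k=1,2$. Substituting into \eqref{eq:f_obj} and writing $G(b):=\Eb[L\,\mathbf 1_{\{L>b\}}]$ yields closed forms $J^1(\theta_1;\theta_2)$ and $J^2(\theta_2;\theta_1)$ that are $C^2$ on the interior of $\Theta$ and in which the loss law enters only through $F_L(b^*)$, $G(b^*)$ and the derivatives $f_L(b^*)$, $f_L'(b^*)$ evaluated at $b^*\in I_L$.

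The core of the argument is then to show that, for each fixed competitor premium, $J^i$ has a \emph{unique} maximizer on $\Theta$ that varies continuously with that premium. Existence of a maximizer is immediate from compactness of $\Theta$ and continuity of $J^i$. For uniqueness I would differentiate twice in $\theta_i$: every second-order term is a product of $\partial b^*/\partial\theta_i$ (a bounded, explicit quantity) with one of $f_L(b^*)$, $f_L'(b^*)$, $\eta'$, $\eta''$, or with a second derivative of the steady-state weights, and the aim is to bound this sum strictly below zero uniformly in $\theta_j\in\Theta$. This is precisely where the hypotheses are used: (i) controls the sign and magnitude of the $f_L'(b^*)$-terms via the log-derivative $f_L'/f_L$; (ii) bounds the $f_L(b^*)$-terms through $\sup_{I_L}f_L\le\mm_1\wedge\mm_2$; and (iii), with the constants $A_1,A_2$, controls the remaining terms coming from the $\eta$-factors and the stationary weights (and, in passing, gives the maximizer the right monotone comparative statics so that it lands in $\Theta$). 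Interchanging the roles of Company~1 and Company~2 amounts to replacing $k_2$ by $1-k_2$ (hence $A_1\leftrightarrow A_2$, $\mm_1\leftrightarrow\mm_2$), so the appearance of the minima $\mm_1\wedge\mm_2$ and $A_1\wedge A_2$ in (ii)--(iii) is exactly what makes the estimate valid for \emph{both} companies simultaneously. Strict concavity in $\theta_i$ then delivers the unique best responses $\rc^1(\bc^2)$, $\rc^2(\bc^1)$, equivalently single-valued maps $\bar\theta_1(\theta_2)$, $\bar\theta_2(\theta_1)$, and their continuity follows from Berge's maximum theorem (or from the implicit function theorem applied to the first-order condition on the interior).

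Finally, the map $\Phi:\Theta^2\to\Theta^2$, $(\theta_1,\theta_2)\mapsto(\bar\theta_1(\theta_2),\bar\theta_2(\theta_1))$, is continuous on the nonempty compact convex set $\Theta^2\subset\Rb^2$, so Brouwer's fixed-point theorem yields a fixed point $(\theta_1^*,\theta_2^*)$, and the associated premium pair $(\bc^{*,1},\bc^{*,2})$ is an equilibrium pricing strategy in the sense of Definition~\ref{def_c}. (Should strict concavity be available only in the weaker form of concavity, one replaces Brouwer by Kakutani applied to the best-response correspondence, whose values are then nonempty, compact and convex.) The main obstacle is the second-derivative estimate in the middle stage: $J^i$ is a sum of several products composed with $b^*(\theta_1,\theta_2)$, so $\partial^2 J^i/\partial\theta_i^2$ has many terms, and showing that (i)--(iii) force their sum to be negative uniformly over $\theta_j\in\Theta$ — with the precise constants $\mm_1,\mm_2,A_1,A_2$ emerging from that bookkeeping — is the delicate part.
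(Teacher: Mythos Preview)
Your overall architecture matches the paper exactly: compute the explicit stationary distribution, write $J^i$ as a smooth scalar function of $\theta_i$, show each $J^i(\,\cdot\,;\theta_j)$ has a unique maximizer depending continuously on $\theta_j$, and close with Brouwer on $\Theta^2$. The only substantive divergence is in \emph{how} you plan to get uniqueness of the maximizer, and this is where your proposal is likely to run into trouble.

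You aim to prove \emph{global} strict concavity, i.e.\ $\partial^2 J^i/\partial\theta_i^2<0$ on all of $\Theta$. The paper does not do this, and it is not clear that conditions (i)--(iii) are strong enough for it. Instead the paper proves the weaker statement ``$\partial J^i/\partial\theta_i=0\ \Rightarrow\ \partial^2 J^i/\partial\theta_i^2<0$'' (Lemmas~\ref{f1<}--\ref{f1>}), which still forces at most one critical point and hence a unique maximizer on the compact interval. The point is that the first-order condition is used as an \emph{algebraic identity} to rewrite some of the second-derivative terms before bounding them; for instance, in the case $h_1<0$ of Lemma~\ref{f1<} the term $I_1$ is shown negative only after substituting \eqref{eq:1-order<}, and in Lemma~\ref{f1>} the expression \eqref{equ:1-order>} plays the same role. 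Without that substitution those pieces do not obviously have a sign, and a direct uniform bound on $\partial^2 J^i/\partial\theta_i^2$ seems out of reach with just (i)--(iii). So keep your three-stage plan, but in stage (b) replace ``prove strict concavity'' by ``prove that every interior critical point is a strict local maximum''; the constants $A_1,A_2,\mm_1,\mm_2$ then arise from the case analysis ($\theta_1\lessgtr\theta_2$, $h_1\lessgtr0$, and subintervals of $\theta_1-\theta_2$ of length $1/k_1$ or $2/k_1$) exactly as you anticipated. Your invocation of Berge (or the implicit function theorem) for continuity of the best response is fine and, if anything, more explicit than the paper's own justification.

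One minor modeling point: in the transition rule \eqref{eq:P} the switching probability is always evaluated at the premium difference in class $\ndo$, so under Assumption~\ref{asu:2-class} a \emph{single} $\eta$ (not a separate $\eta_1,\eta_2$ per class) enters the stationary distribution; the product form is then $p(1,i)=\eta^{(i)}F_L(b^*)$, $p(2,i)=\eta^{(i)}(1-F_L(b^*))$ with the same $\eta^{(i)}$ in both. This does not change your argument but will clean up the formula for $J^i$.
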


\begin{remark}
	\label{rem:cond}
Condition (i) in Theorem~\ref{thm:c_op} indicates that when $\ell \in I_L$,  the derivative of the log density function is bounded over an interval multiplicative of parameter $k_1$; this condition requires the density $f_L$ to vary relatively smoothly when losses are in $I_L$. Losses from low-probability, high-severity insurance policies often follow heavy-tailed distributions,  which typically satisfy this slowly varying condition (an example is Gamma distribution, frequently used in modeling insurance losses).

For loss distributions with $f_L$ bounded above by $1-p_0$ (such as Pareto distributions with shape parameter less than or equal to the minimum loss),  Condition (ii) naturally holds, if the probability of losses, \(1-p_0\), satisfies \[1-p_0 \le  \frac{k_1}{2 \delta (2-\delta)[k_2 \vee (1-k_2)] (\kappa-1)^2}.\]
This condition is rather weak because $\kappa \in (1.2, 1.5)$ in most BMS models (see Chapter 12 of \cite{MeyersJones2015}). To further illustrate the reasonableness of the technical conditions in Theorem~\ref{thm:c_op}, we provide in Appendix~\ref{sec:example} a concrete example in which all these conditions are simultaneously satisfied under economically plausible parameter specifications.
\end{remark}

As seen from \eqref{eq:theta}, every premium strategy $(\bc^1, \bc^2)$ is uniquely linked with a pair $(\theta_1, \theta_2)$, with $\theta_i$ being the premium charged on Class 1 from Company $i$, $i=1,2$. For this reason, we call $(\theta_1^*, \theta_2^*)$ an equilibrium if it yields $(\bc^{*,1}, \bc^{*,2})$ in Definition~\ref{def_c}. We next show an interesting result on the relative ordering of the equilibrium premiums $\theta_1^*$ and $\theta_2^*$. Recall from \eqref{k1k2} that $k_2$ is the  probability that insureds choose Company 1 when both companies offer the same premium ($\theta_1 = \theta_2$). 

\begin{proposition}
    \label{p4.1}
    Suppose that an equilibrium premium strategy $(\theta_1^*, \theta_2^*)$ exists. If the density function $f_L (\ell)$ at strictly positive losses $\ell >0$ satisfies
    \[
        \sup_{\ell \in I_L}
        %{\delta (\kappa-1) \Eb[L] \le \ell \le \delta (\kappa-1) \frac{M}{k}}  
        \, \ell f_L (\ell)
        \le \frac{1}{(1-\delta)(\kappa-1)}, \, \, \,  with \, \, I_L := \left[\delta   (\kappa-1)  \Eb[L], \,  \frac{\delta (\kappa-1) M}{\kappa} \right],
    \]
    then the ordering of the equilibrium premiums is determined by \(k_2\) as follows: 
    $\theta_1^* \ge \theta_2^*$ when $k_2 \ge \frac{1}{2}$, and $\theta_1^* \le \theta_2^*$ when $k_2 \le \frac{1}{2}$. In particular, \(\theta_1^* = \theta_2^*\) when \(k_2 = \frac{1}{2}\).
\end{proposition}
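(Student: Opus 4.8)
The plan is to use the exchange symmetry between the two companies to rewrite both objectives through a single bivariate profit function, and then to compare the optimality conditions satisfied at the equilibrium by a monotone-comparative-statics argument.

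\textbf{Reduction to an explicit profit function.} Under Assumption~\ref{asu:2-class}, Theorem~\ref{thm:bstar} gives the insureds' barrier $b^*(\theta_1,\theta_2)$ in closed form, and because this barrier is the same in both rate classes, the next rate class of any insured depends only on whether a loss is reported; hence all four states in \eqref{eq:four_states} share the same one-step transition distribution, so the stationary vector $p$ coincides with that common distribution and is explicit. Substituting $p(n,i;\bc)$ and $b^*$ into \eqref{eq:f_obj} yields $J^1(\bc^1;\bc^2)=\Pi(\theta_1,\theta_2;k_2)$ for an explicit function $\Pi(a,b;k)$ of the own premium $a$, the competitor's premium $b$, and the baseline loyalty $k$. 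Using that $1-\eta(\cdot)$ in \eqref{k1k2} is the $\eta$-function of $-(\cdot)$ with $k_2$ replaced by $1-k_2$ --- which also leaves $b^*$ invariant under $(\theta_1,\theta_2,k_2)\mapsto(\theta_2,\theta_1,1-k_2)$ --- I would check that $J^2(\bc^2;\bc^1)=\Pi(\theta_2,\theta_1;1-k_2)$, i.e., Company~2 solves the same problem with $k_2$ replaced by $1-k_2$. Finally, since $b^*(\theta_1,\theta_2)$ equals $\delta(\kappa-1)$ times a convex combination of $\theta_1$ and $\theta_2$, it lies in $I_L$ for every admissible premium pair, so the hypothesis controls $\ell f_L(\ell)$ precisely over the range of values the barrier can take.

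\textbf{Structural properties of $\Pi$.} I would then show that, on $\Theta\times\Theta\times(0,1)$, the function $\Pi$ is (i) strictly concave in $a$, (ii) has increasing differences in $(a,b)$, and (iii) has increasing differences in $(a,k)$. These follow from differentiating the explicit formula for $\Pi$; the only contributions that can carry the ``wrong'' sign are those coming from the dependence of the premiums on the endogenous barrier, namely derivatives of $F_L(b^*)$ and of $\Eb[L\,\mathbf{1}_{\{L>b^*\}}]=\Eb[L]-\int_0^{b^*}\ell f_L(\ell)\,\dd\ell$, which involve $b^*f_L(b^*)$ and $f_L(b^*)$. Since $b^*\in I_L$, the hypothesis $\sup_{\ell\in I_L}\ell f_L(\ell)\le\frac{1}{(1-\delta)(\kappa-1)}$ dominates these feedback terms and pins down the three signs. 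I expect this step to be the main obstacle: because $b^*$, $F_L(b^*)$, the market-share factors $\eta(\theta_2-\theta_1)$ and $\eta(\kappa(\theta_2-\theta_1))$, and the retained-loss term all vary simultaneously with $a$ and $b$, establishing (i)--(iii) is a delicate accounting of cross-terms rather than a one-line estimate.

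\textbf{Comparison at the equilibrium.} With (i)--(iii) in hand, I would argue as follows. By (i) the map $a\mapsto\Pi(a,b;k)$ on $\Theta$ has a unique maximizer $r(b;k)$, and by (ii)--(iii) together with Topkis' monotonicity theorem, $r$ is nondecreasing in $b$ and in $k$. At any equilibrium, $\theta_1^*=r(\theta_2^*;k_2)$ and $\theta_2^*=r(\theta_1^*;1-k_2)$. Suppose $k_2\le\tfrac12$ and, for contradiction, that $\theta_1^*>\theta_2^*$; then $\theta_1^*=r(\theta_2^*;k_2)\le r(\theta_1^*;k_2)\le r(\theta_1^*;1-k_2)=\theta_2^*$, where the first inequality uses $\theta_2^*<\theta_1^*$ and monotonicity in $b$, and the second uses $k_2\le 1-k_2$ and monotonicity in $k$ --- a contradiction. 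Hence $\theta_1^*\le\theta_2^*$ whenever $k_2\le\tfrac12$. Exchanging the roles of the two companies (equivalently, replacing $k_2$ by $1-k_2$) gives $\theta_1^*\ge\theta_2^*$ whenever $k_2\ge\tfrac12$, and the two conclusions together force $\theta_1^*=\theta_2^*$ when $k_2=\tfrac12$.
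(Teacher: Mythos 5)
Your overall skeleton (reduce to a single profit function $\Pi(a,b;k)$ via the exchange symmetry $(\theta_1,\theta_2,k_2)\mapsto(\theta_2,\theta_1,1-k_2)$, then run a Topkis/monotone-comparative-statics argument on the best response $r(b;k)$) is internally coherent, the symmetry of $\eta$ and of $b^*$ under that swap is genuine, and your observations that all four states share one transition row and that $b^*\in I_L$ for every admissible premium pair are both correct. The fixed-point comparison at the end is also logically sound \emph{conditional on} properties (i)--(iii). But this is a genuinely different route from the paper, and the load-bearing step --- establishing (i) strict concavity of $\Pi$ in the own premium and (ii)--(iii) increasing differences in $(a,b)$ and $(a,k)$ --- is exactly the step you leave undone, and it is not a routine verification. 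The proposition's sole hypothesis is the bound $\sup_{\ell\in I_L}\ell f_L(\ell)\le \frac{1}{(1-\delta)(\kappa-1)}$; concavity of $J^1$ in $\theta_1$ is precisely what the paper's Lemmas~\ref{f1<} and~\ref{f1>} labor to obtain, and they need the much stronger Conditions (i)--(iii) of Theorem~\ref{thm:c_op} (bounds on $f_L'/f_L$, on $\sup f_L$, and on $(1-\delta)k_1M/\kappa$) to do so. Since Proposition~\ref{p4.1} is stated assuming only that \emph{some} equilibrium exists, without importing those conditions, a proof that presupposes concavity and a single-valued, monotone best response establishes a strictly weaker statement than claimed. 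There is also a technical obstruction you do not address: $\eta$ in \eqref{k1k2} has a kink at $\theta_1=\theta_2$ (one-sided slopes $k_1k_2$ and $k_1(1-k_2)$), so $\Pi$ is not $C^2$ across the diagonal and increasing differences cannot be checked by a cross-partial computation alone --- and the diagonal is exactly where the comparison matters.

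The paper avoids all of this. Its proof fixes $k_2\ge\tfrac12$, supposes for contradiction that $\theta_1^*<\theta_2^*$, and writes down only the \emph{necessary} first-order inequalities $\frac{\partial J^1}{\partial\theta_1}\le 0$ and $\frac{\partial J^2}{\partial\theta_2}\ge 0$ that must hold at any maximizer in $\Theta$ (interior or boundary), with no concavity or uniqueness needed. Combining the two inequalities produces a quantity $\mathcal H\le 0$, and the hypothesis on $\ell f_L(\ell)$ is used exactly once in each case --- to bound the feedback terms $f_L(b^*)\,b^*\,(1-\delta)(\kappa-1)$ by $1\le a+\kappa(1-a)$ --- to show $\mathcal H>0$, a contradiction. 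If you want to salvage your approach, you would either have to add the hypotheses of Theorem~\ref{thm:c_op} to the proposition (changing its statement), or actually prove the supermodularity claims from the $\ell f_L(\ell)$ bound alone, which I do not believe is possible; the direct first-order-condition comparison is the right tool here.
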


\begin{remark}
	The ordering results are consistent with intuition and the definition of $k_2$. When $k_2 \ge 1/2$, Company 1 has the preference advantage over Company 2 among insureds, and this advantage allows Company 1 to set a higher premium than Company 2 in the competition. We also note that the condition on $f_L$ in Proposition~\ref{p4.1} holds under the conditions of Theorem~\ref{thm:c_op}, as shown by the inequalities below
	\begin{align*}
		\sup_{\ell \in I_L} \, \ell f_L (\ell) \le \frac{\delta (\kappa-1) M}{\kappa} \, \sup_{\ell \in I_L} \, f_L(\ell) \le \frac{1}{2 [k_2 \vee (1-k_2)] (1-\delta) (\kappa-1)} \le \frac{1}{(1-\delta)(\kappa-1)}.
	\end{align*}
\end{remark}

Theorem~\ref{thm:c_op} establishes the existence of an equilibrium premium strategy $(\bc^{*,1}, \bc^{*,2})$ (or equivalently, $(\theta_1^*, \theta_2^*)$), under three technical conditions. Remark~\ref{rem:cond} offers some explanations to these conditions; however, it remains unclear whether all three conditions hold at the same time, as required by Theorem~\ref{thm:c_op}, under practical specifications. In the rest of this section, we provide an example offering a positive answer to this question.

\section{Numerical analysis} \label{nubs}

Under the 2-class BMS in Section \ref{sec:insu},  Theorem \ref{thm:c_op} shows that the equilibrium premium strategy $(\bc^{*,1}, \bc^{*,2})$ in Definition~\ref{def_c} exists, when certain technical conditions hold. However, the equilibrium premiums depend strongly on the loss distribution, and they do not admit an analytical solution. As such, we conduct a numerical analysis in this section to gain further insights of the equilibrium premiums. Note that the analysis below focuses on the same 2-class BMS as in Section \ref{sec:insu}, under which a premium strategy $(\bc^1, \bc^2)=(\{c_1^1, c_2^1\}, \{c_1^2, c_2^2\})$ is one-to-one with $(\theta_1, \theta_2) \in \Theta^2$, as shown in \eqref{eq:theta}. For this reason, we call $(\theta_1^*, \theta_2^*)$ an equilibrium if it yields $(\bc^{*,1}, \bc^{*,2})$ in Definition~\ref{def_c}. 

To start,  we assume that the insured’s per-period  loss \(L\) is a mixture of a point mass at zero and a $Gamma(\alpha, \lambda)$ distribution (see Example~\ref{exm:condition}), with weights \(p_0 \in (0, 1)\) and \(1-p_0\), respectively.
We set the parameter values for the model as listed in Table~\ref{tab:parameters}, which constitutes the base case for our numerical study. When we study the impact of a particular parameter on the equilibrium $(\theta_1^*, \theta_2^*)$, we let this parameter vary over a reasonable range (around its base value) but keep all other parameters fixed as in Table~\ref{tab:parameters}. Recall that $k_1$ and $k_2$ are from the function $\eta$ in \eqref{k1k2}, and $\kappa$ is the premium penalty on Class 2 (see \eqref{eq:theta}).

\begin{table}[htb]
	\centering
	\begin{tabular}{lcc}
		\hline
		\textbf{Parameter} & \textbf{Symbol} & \textbf{Value} \\
		\hline
		\(\Pb (L=0)\) & $p_0$ & \(0.9\) \\
		$L|L>0 \sim \, Gamma(\alpha, \lambda)$ & \( (\alpha, \lambda ) \) & \((1.2, 0.0085)\) \\
		Upper bound on the premium & $M$ & \(35.853\) \\
		Price sensitivity & $k_1$ & \(0.015\) \\
		Preference for Company 1 & $k_2$ & \(0.8\) \\
		Premium penalty & $\kappa$ & \(1.25\) \\
		Discount factor & $\delta$ & \(0.97\) \\
		\hline
	\end{tabular}
	\caption{Model parameters in the base case}
	\label{tab:parameters}
\end{table}

Assuming that the model parameters are given in Table~\ref{tab:parameters}, we numerically compute the equilibrium premiums and obtain 
\begin{align*}
	\theta_1^* \approx 35.8293 \quad \text{and} \quad \theta_2^* \approx 33.4501.
\end{align*}
Note that we have $k_2 > 0.5$, and the above results confirm that $\theta_1^* > \theta_2^*$ as predicted by Proposition~\ref{p4.1}. Before we proceed to conducting sensitivity analysis, we note that the kinks (i.e., non-differentiable points) observed in the subsequent figures arise from the presence of an upper bound on premiums ($M$ in \eqref{eq:C}),  which truncates the curves that would otherwise vary smoothly with respect to the parameters.

\begin{figure}[htbp]
    \centering
    \begin{subfigure}[b]{0.45\textwidth}
        \centering
        \includegraphics[width=\textwidth, trim = 1cm 0cm 1cm 0.5cm, clip=true]{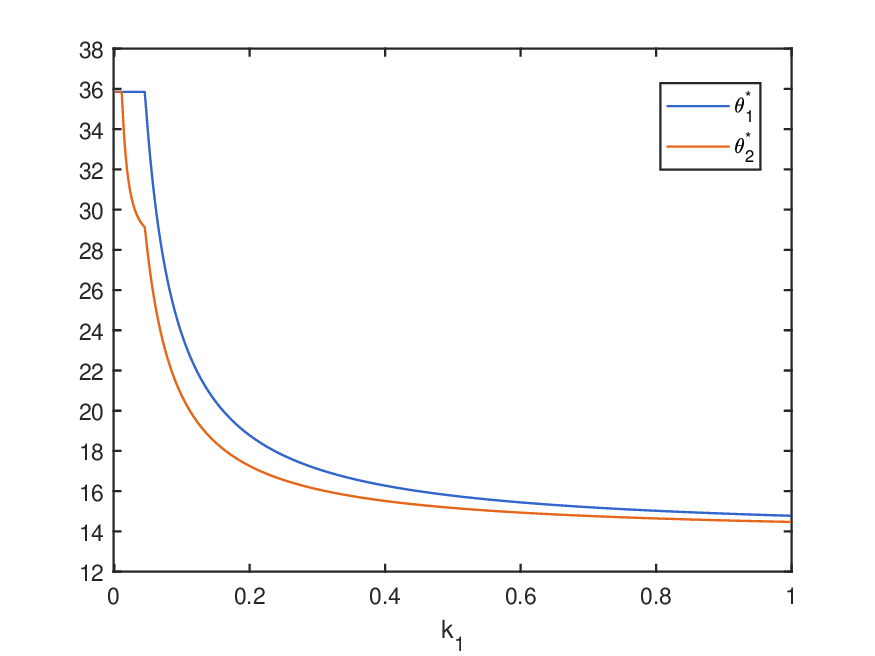}
    \end{subfigure}
    \hfill
    \begin{subfigure}[b]{0.45\textwidth}
        \centering
        \includegraphics[width=\textwidth, trim = 1cm 0cm 1cm 0.5cm, clip=true]{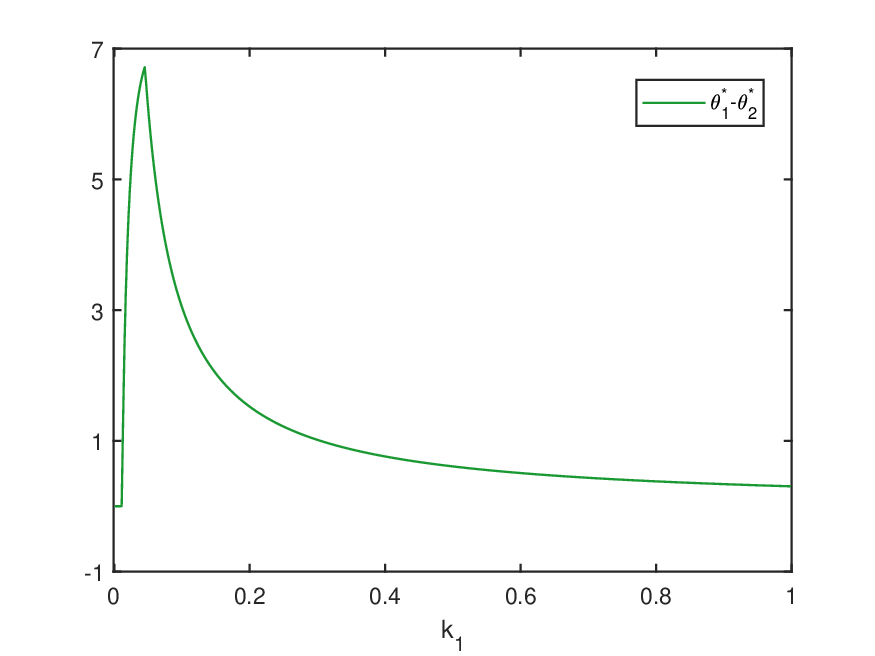}
    \end{subfigure}
\vspace{-2ex}
    \caption{Effect of the price sensitivity coefficient \(k_1\) on the equilibrium premiums}
    \label{fig:both1}
\end{figure}
In the first sensitivity study, we investigate how the price sensitivity coefficient \(k_1\) of the insureds affects the equilibrium premiums $\theta_1^*$ and $\theta_2^*$, and their difference. 
By \eqref{k1k2}, the bigger the coefficient $k_1$, the more likely that an insured chooses the cheaper insurance company to purchase insurance. The left panel of Figure~\ref{fig:both1} shows that the equilibrium premiums $\theta_1^*$ and $\theta_2^*$ remain constant initially and then decrease as $k_1$ increases. For very small $k_1$, the maximum premium constraint is binding, which explains the overlapping flat part. 
As $k_1$ increases,  however, insureds are more sensitive to the price difference between the two companies, and this naturally drives both companies to lower their premiums to attract insureds.
The right panel of Figure~\ref{fig:both1} shows that the premium difference $\theta_1^* - \theta_2^*$ is initially equal to zero and then exhibits a sharp rise, followed by a monotonic decline. 
When price sensitivity increases from very low levels, the market transitions from a regime dominated by intrinsic preference (or brand bias) to one in which price becomes a more important factor in the insured's choice on insurance provider.
In this transition region,  the disadvantaged insurer (Company~2) faces a higher effective demand elasticity, while the advantaged insurer (Company~1) benefits from a larger and more loyal base of customers. Consequently, Company~2 has a stronger incentive to reduce its premiums than Company~1 in the competition. This asymmetric adjustment leads to an increase in the equilibrium premium difference in the early stage of the transition. Such a behavior is consistent with the ``fat-cat'' strategy described by \cite{fudenberg1984fat},  whereby an incumbent with market power responds less aggressively to emerging competitive threats,  relying instead on its established advantage. 
However,  once $k_1$ exceeds a critical threshold, the price itself becomes the dominating factor when insureds choose their insurance provider, and the preference advantage of Company~1 over Company~2 weakens in the insureds' decision. In this high-sensitivity regime, both companies apply similar pricing strategies,  resulting in a rapid convergence of premiums.

\begin{figure}[htbp]
    \centering
    \begin{subfigure}[b]{0.45\textwidth}
        \centering
        \includegraphics[width=\textwidth, trim = 1cm 0cm 1cm 0.5cm, clip=true]{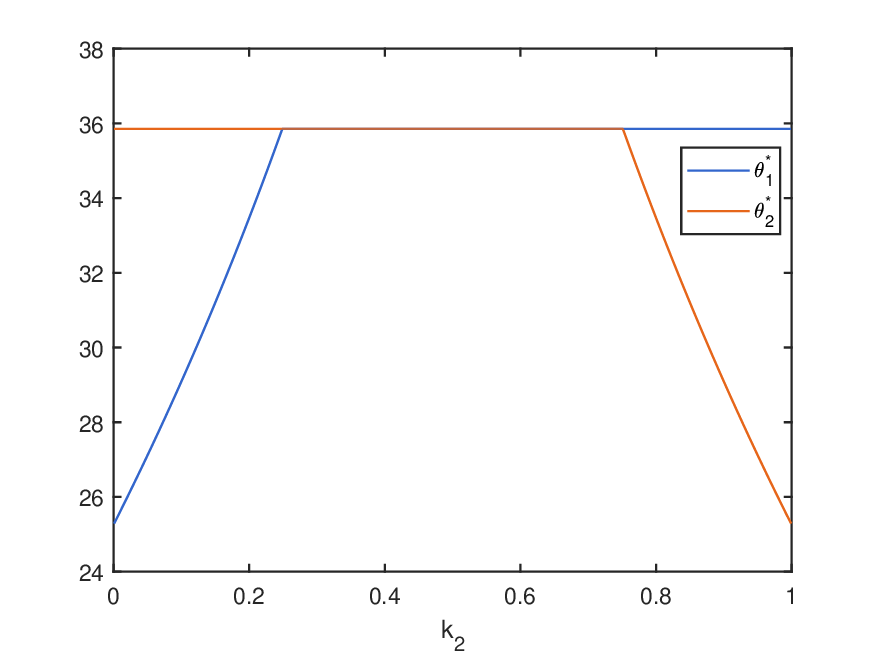}
    \end{subfigure}
    \hfill
    \begin{subfigure}[b]{0.45\textwidth}
        \centering
        \includegraphics[width=\textwidth, trim = 1cm 0cm 1cm 0.5cm, clip=true]{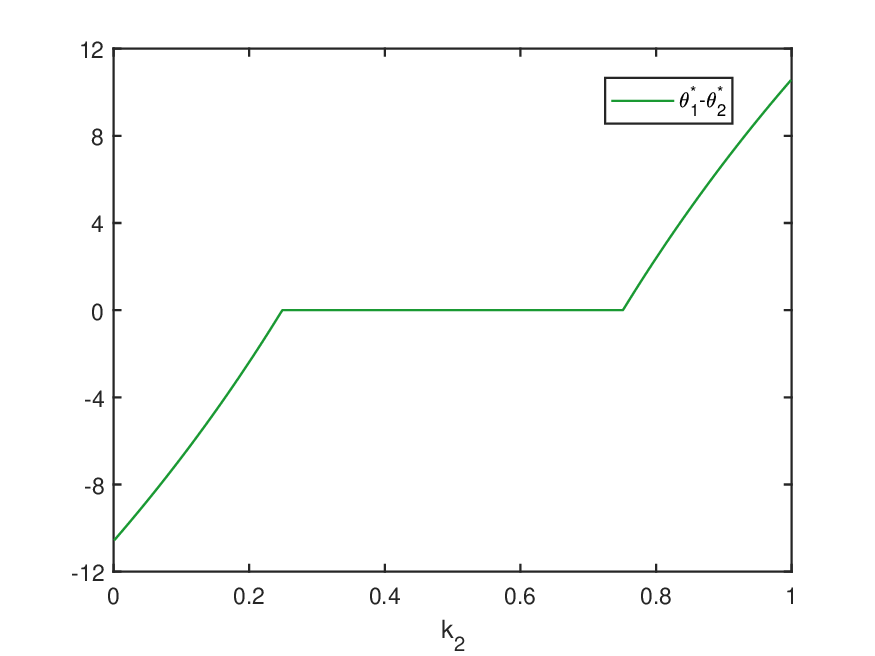}
    \end{subfigure}
    \vspace{-2ex}
    \caption{Effect of the preference coefficient \(k_2\) for Company~1  on the equilibrium premiums}
    \label{fig:both2}
\end{figure}

Next, we examine the impact of the preference coefficient $k_2$ on the equilibrium premiums and their difference. Recall from \eqref{k1k2} that $k_2$ is the probability that an insured chooses Company~1 when two companies offer the same premiums; therefore, $k_2 > 0.5$ (resp., $k_2 < 0.5$) implies a preference advantage for Company~1 (resp., Company~2). The order of the equilibrium premiums in the left panel of Figure~\ref{fig:both2} is fully consistent with the meaning of $k_2$ and confirms the theoretical result in Proposition~\ref{p4.1}. As long as the preference effect is strong enough (i.e., when $k_2$ is close to either 0 or 1), the advantaged company charges the maximum allowed premiums, but the disadvantaged company offers very low premiums in order to attract customers. It is worth pointing out that the equilibrium premiums from the two companies coincide over an \emph{interval} around $k_2 = 0.5$, not limited to $k_2 = 0.5$; this is due to the existence of an upper bound on premiums, and for the particular base parameters, the equilibrium premiums are achieved at the upper bound for a range of $k_2$ near 0.5. Indeed, if we manually lift out the upper bound $M$ to 120, the difference $\theta_1^* - \theta_2^*$ is strictly increasing (from around -30 to 30) when $k_2$ increases from 0 to 1, and the two premiums are only equal at the point $k_2 = 0.5$. (Graphs for this case are available upon request, and here we omit them to save space.)

\begin{figure}[htbp]
    \centering
    \begin{subfigure}[b]{0.45\textwidth}
        \centering
        \includegraphics[width=\textwidth, trim = 1cm 0cm 1cm 0.5cm, clip=true]{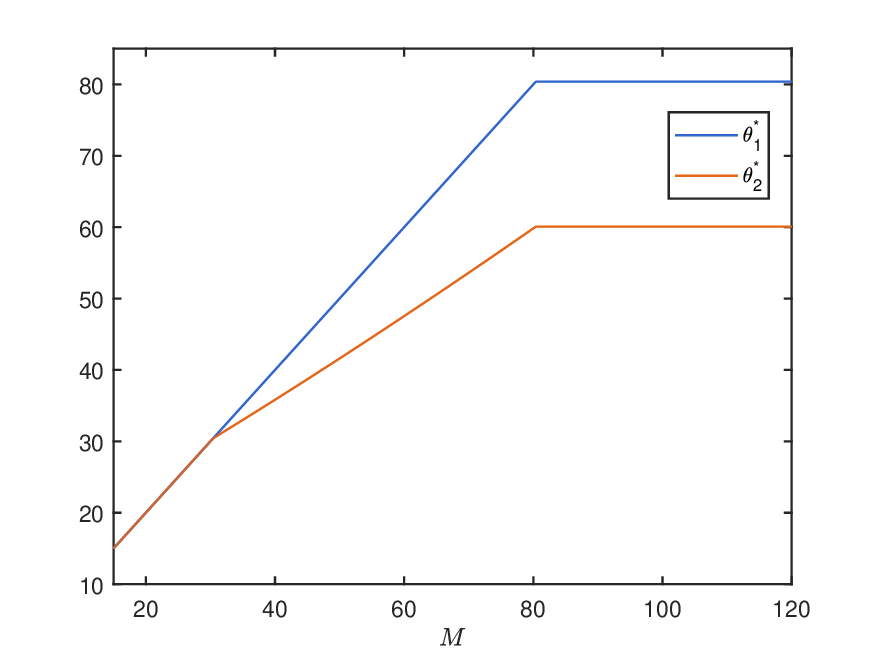}
    \end{subfigure}
    \hfill
    \begin{subfigure}[b]{0.45\textwidth}
        \centering
        \includegraphics[width=\textwidth, trim = 1cm 0cm 1cm 0.5cm, clip=true]{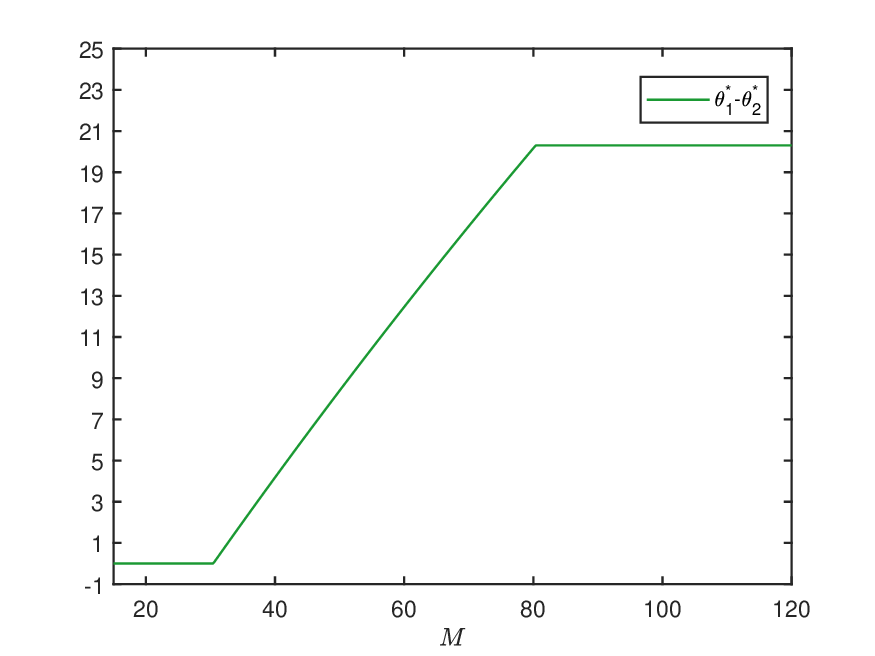}
    \end{subfigure}
\vspace{-2ex}
    \caption{Effect of the upper bound on the premium  \(M\) on the equilibrium premiums}
    \label{fig:both3}
\end{figure}

Finally, Figure~\ref{fig:both3} demonstrates the impact of the upper bound on the premium  \(M\)  on the equilibrium premiums and their difference.  Figure~\ref{fig:both3} (left panel) confirms an overall increasing relationship between the equilibrium premiums and the upper bound on the premium. In fine detail, the graph illustrates that the equilibrium premiums of the two insurance companies initially increase synchronously as the upper bound on the premium  \(M\) rises. Subsequently,  the increase in $\theta_1^*$ outpaces that of $\theta_2^*$,  and both eventually converge to stable levels. The right panel shows that when $M$ is small (specifically,  below 30.3450),  the premium difference $\theta_1^* - \theta_2^*$ remains zero; once this threshold is exceeded,  the difference increases monotonically and ultimately stabilizes at a certain level.

\section{Conclusion}\label{conl}

This paper considers a discrete-time BMS insurance model with two competing insurance companies and a continuum of insureds, aiming to understand the strategic decisions made by insureds regarding loss reporting and by insurance companies regarding BMS premiums. For given premiums, we solve the insureds' loss reporting problem and obtain an optimal barrier strategy that minimizes the expected total expenses. Next, knowing how insureds report losses, the insurance companies set their premiums to maximize their expected profit over a unit period, and their competition is settled by a non-cooperative Nash game. We find sufficient conditions that guarantee the existence of a Nash equilibrium for the 2-class BMS setup. These theoretical results provide economic understandings of the strategic underreporting and inter-company competition in a competitive insurance market. Numerical analyses further illustrate the economic implications of the model. When the insureds' price sensitivity increases, the equilibrium premiums of the two insurers converge,  indicating intensified competition and a reduction in market power. Conversely,  stronger brand preference induces persistent price asymmetry even when both companies face identical cost structures. 

%An increase in the maximum allowed premium leads to a rise in the equilibrium premiums until they stabilize, indicating that a higher upper bound on the premium allows insurers to better reflect risk levels until the market reaches its natural equilibrium. 
%Together,  these results highlight that the interplay between behavior and market structure is crucial for understanding real-world premium dynamics.

From a practical perspective,  the findings provide a theoretical foundation for actuarial pricing under competitive and behaviorally complex environments. Future research directions include extending the model to asymmetric information settings or heterogeneous loss distributions.
\vskip 10pt
\noindent
\paragraph{Acknowledgements.} Zongxia Liang is supported by the National Natural Science Foundation of China (no.12271290). The authors thank the members of the group of Mathematical Finance and Actuarial Science at the Department of Mathematical Sciences, Tsinghua University, for their helpful feedback and conversations. The authors used ChatGPT (GPT-5, OpenAI, used in December 2025) to assist in improving the grammar and writing style of this manuscript. 
\vskip 10pt
\noindent
\bibliographystyle{apalike}
\bibliography{reference} 

\appendix

\section{Proofs}
\label{sec:proofs}

\subsection{Proof of Theorem \ref{thm:b_op}}
 \begin{proof}
	Given $\mathbf{c}$,  denote the ``one-step'' cost by 
	$ r^{\mathbf{c}}_{x}(b) := \delta\Eb_{x} \left[c(x) + L_1 \mathbf{1}_{\{L_1 \leq b(x)\}} \right] $,  in which the subscript indicates the condition on the initial state $X_1 = x= (n,  i) \in \Xc$. For every $x \in \Xc$, the value function $V^*$ satisfies the Bellman equation 
	\begin{align}
		\label{eq:V_HJB}
		V^*(x) = \inf_{b \in \Bc} \,  \left\{ r^{\bc}_{x}(b) + \delta \sum_{X_2} \big( \Pb_{x, b, \bc}(X_2) \,  V^*(X_2)  \big)  \right\} := \,  \mathcal{L}^{\bc} \, V^*(x), 
	\end{align}
	where $X_2$ can only take one of the four states,  $(\ndo,  i)$,  $(\ndo,  3-i)$,  $(\nup,  i)$,  $(\nup,  3-i)$,  and the transition probabilities $\Pb_{x, b, \bc}(X_2)$, given the premium pair $\bc$, $X_1 = x$ and a barrier strategy $b$, are given by \eqref{eq:P}. 
	
Let $\Vc := \{v: \Xc \mapsto \Rb \}$ and equip $\Vc$ with the sup norm,  $\| v \| := \sup\limits_{x \in \Xc} \,  |v(x)|$ for all $v \in \Vc$, then,  $(\Vc,  \| \cdot \|)$ is a Banach space. We first show that $\mathcal{L}^{\bc} : v \in \Vc \mapsto \mathcal{L}^{\bc} v \in \Vc$ is a contraction. To that end,  fix an arbitrary $x \in \Xc$ and consider two functions $v_1,  v_2 \in \Vc$,  assume,  without loss of generality,  that $\mathcal{L}^{\bc} v_1(x) \ge \mathcal{L}^{\bc} v_2(x)$ and let $\{b_{(m)} \}_{m=1}^{\infty}$ be a sequence of barrier strategies achieving the infimum in $\mathcal{L}^{\bc} v_2(x)$. Then we obtain 
	\begin{equation}
		\label{eq:v12}
		\begin{split}
			&\mathcal{L}^{\bc} v_1(x) -   \mathcal{L}^{\bc} v_2(x) \\
			%=&  \inf_{b\in\Bc}\left(r_{x}^{\mathbf{c}}(b)+\delta \sum_{X_2 \in \Xc}\Pb_{x, b}(X_2) v_1(X_2) \right)-\inf_{b\in\Bc}\left(r_{x}^{\mathbf{c}}(b)+\delta \sum_{X_2 \in \Xc}\Pb_{x, b}(X_2) v_2(X_2)\right) \\
			=&  \inf_{b\in\Bc} \Big(r_x^{\mathbf{c}}(b)+\delta \sum_{X_2 \in \Xc}\Pb_{x, b, \bc}(X_2) v_1(X_2) \Big)-\lim\limits_{m \to +\infty} \Big(r_{x}^{\mathbf{c}}(b_{(m)})+\delta \sum_{X_2 \in \Xc}\Pb_{x, b_{(m)}, \bc}(X_2) v_2(X_2) \Big)\\
			\leq &  \liminf_{m \rightarrow +\infty} \bigg[ \Big(r_{x}^{\mathbf{c}}(b_{(m)})+\delta \sum_{X_2 \in \Xc}\Pb_{x, b_{(m)}, \bc}(X_2) v_1(X_2) \Big)-\Big(r_{x}^{\mathbf{c}}(b_{(m)}) +\delta \sum_{X_2 \in \Xc}\Pb_{x, b_{(m)}, \bc}(X_2) v_2(X_2) \Big)\bigg]\\
			%  =& \liminf_{m \rightarrow +\infty}\delta \sum_{x_0 \in S}\Pb_{x, b_m}(x_0)(v_1(x_0)-v_2(x_0))\\
			\leq &  \delta \, \liminf_{m \rightarrow +\infty} \sum_{X_2 \in S}\Pb_{x, b_{(m)}, \bc}(X_2) \big( v_1(X_2) - v_2(X_2) \big) \le \delta \|v_1-v_2\|.
		\end{split}
	\end{equation}
	Similarly, if $\mathcal{L}^{\mathbf{c}}v_1(x) \leq \mathcal{L}^{\mathbf{c}}v_2(x)$,  then $0 \leq \mathcal{L}^{\mathbf{c}}v_1(x) - \mathcal{L}^{\mathbf{c}}v_2(x) \leq \delta \|v_1-v_2\|$. Because $\delta \in (0,1)$, $\mathcal{L}^{\mathbf{c}}$ is a contraction mapping on $(\Vc,  \| \cdot \|)$ as claimed. By Banach Fixed-Point Theorem,  it follows that there exists a unique $v^* \in \Vc$ such that $\mathcal{L}^{\mathbf{c}} v^* = v^*$, and this $v^*$ coincides with the value function $V^*$ in \eqref{eq:V_HJB}.
	
	Next, we prove the existence and uniqueness of an optimal barrier strategy \(b^*\). From \eqref{eq:V_HJB}, we obtain, for all $(n, i) = x \in \Xc$,   
	\begin{align*}
		V^*(n, i) =\inf_{b \in \Bc }\Bigl\{& \, \delta\Eb\left[c_n^i+L\mathbf{1}_{\{L \leq b_n^i\}} \right] \\
		&+\delta \, \Pb(L \leq b_n^i) \left( 1-\etat(\Delta c^{i, 3-i}_{ \ndo }) \right) V^*(\ndo, i) +\delta \,  \Pb(L \leq b_n^i) \, \etat(\Delta c^{i, 3-i}_{ \ndo })V^*(\ndo, 3-i)\\
		&+\delta \,  \Pb(L > b_n^i) \left(1-\etat(\Delta c^{i, 3-i}_{\nup}) \right) V^*(\nup, i)+\delta \,  \Pb(L > b_n^i) \, \etat(\Delta c^{i, 3-i}_{\nup})V^*(\nup, 3-i)\Bigl\}\\
        :=\inf_{b \in \Bc }&\,Q(n,i;b).
	\end{align*}
	Note that $Q(n, i;b)$ only depends on $b_n^i$ and is independent of the rest entries in the matrix $b$. Using this result and  differentiating $Q$ with respect to $b$ yield  
	\[
	\frac{\partial Q}{\partial b} = \text{diag} \left\{\delta f_L (b_n^i)\left[b_n^i- \varphi (n,  i)\right]\right\}_{n\in \Nc, i\in \Ic}\ ,
	\]
	in which $f_L$ is the density in \eqref{F_L}, and $\varphi$ is defined in \eqref{eq:varphi}. As such, we have $b^*(n, i) = \varphi (n,  i) \vee 0$, establishing the existence and uniqueness of an optimal barrier strategy simultaneously.
	
	At the last,  we prove the continuity of $b^*$ with respect to the given premium pair $\mathbf{c}$. Let $\mathbf{c},\  \ch \in \Cc^2$ be two premium pairs from the insurance companies and denote the corresponding value functions by $V^*_{\mathbf{c}}$ and $V^*_{\ch}$, respectively. Using \eqref{eq:V_HJB}, we have 
	\begin{equation}
		\label{eq:V1}
		\begin{split}
			\|V^*_{\ch}-V^*_{\bc}\| &=\|\mathcal{L}^{\ch}V^*_{\ch} - \mathcal{L}^{\mathbf{c}}V^*_{\mathbf{c}}\| \leq 
            \|\mathcal{L}^{\ch}V^*_{\ch} - \mathcal{L}^{\ch}V^*_{\mathbf{c}}\| + 
            \|\mathcal{L}^{\ch}V^*_{\mathbf{c}} - \mathcal{L}^{\mathbf{c}}V^*_{\mathbf{c}}\|
			\\
			&\leq 
            \delta\|V^*_{\ch}-V^*_{\bc}\| + 
            \|\mathcal{L}^{\ch}V^*_{\mathbf{c}} - \mathcal{L}^{\mathbf{c}}V^*_{\mathbf{c}}\|. 
            \end{split}
	\end{equation}
	Note that $\Pb_{x,b,\bc}$ is continuous with respect to $\bc$, recalling the definition of $\mathcal{L}_{\mathbf{c}}$ in \eqref{eq:V_HJB} and  following a similar argument as in \eqref{eq:v12}, we have  $$\lim_{\ch \rightarrow \bc} \|\mathcal{L}^{\ch}V^*_{\mathbf{c}} - \mathcal{L}^{\mathbf{c}}V^*_{\mathbf{c}}\|  = 0. $$ Combining the above results yields \[
	\lim\limits_{\ch \rightarrow \bc}\|V^*_{\ch}-V^*_{\bc}\| \leq (1-\delta)^{-1}\lim\limits_{\ch \rightarrow \bc}
    \|\mathcal{L}^{\ch}V^*_{\mathbf{c}} - \mathcal{L}^{\mathbf{c}}V^*_{\mathbf{c}} \| =0, 
	\]
	which impliess that $V^*$ is continuous with respect to $\mathbf{c}$. From the explicit expression of $b^*$,  we conclude that $b^*$ is also continuous with respect to $\mathbf{c}$.
\end{proof}

\subsection{Proof of Theorem~\ref{thm:bstar}}

\begin{proof}
    Given \( (\theta_1, \theta_2) \in \Theta^2\), the corresponding premium pair is denoted by $\mathbf{c}$,
 we first solve for the insureds' optimal value function \(V^*\)  defined as follows:
    \[
    V^*(X_1) = \inf_{b \in \Bc} \; \Eb \left[ \sum_{t=1}^\infty \delta^t \left( c(X_{t}) + L_{t} \cdot \mathbf{1}_{\{L_{t} \leq b(X_t)\}} \right) \Big| X_1 \in \Xc \right].
    \]
    For every $x \in \Xc$, the value function $V^*$ satisfies the Bellman equation 
    \begin{align*}
         V^*(x) & = \inf_{b \in \Bc} \,  \left\{ r^{\bc}_{x}(b) + \delta \sum_{X_2} \big( \Pb_{x, b,\bc}(X_2) \,  V^*(X_2)  \big)  \right\} := \inf_{b \in \Bc} \, \left \{ r_x^{\bc}(b) + \delta \,  \mathcal{T}_{b} V^*(x) \right \},
    \end{align*}
    where $r^{\bc}_{x}(b):= \delta \, \Eb_{x} \left[c(x) + L_1 \mathbf{1}_{\{L_1 \leq b(x)\}} \right]$. 

    By Theorem~\ref{thm:b_op} and \eqref{eq:same_b}, there exists a unique optimal reporting strategy $b^*$ satisfying 
    \begin{align*}
    V^*(x)  &= r_x^{\bc}(b^*) + (\delta \,\mathcal{T}_{b^*} V^*)(x)\\
   b^*(s_1)  &= b^*(s_2) = b^*(s_3) = b^*(s_4) := \bar{b}^*, \\
     (\mathcal{T}_{b^*}V^*)(s_1)  &= (\mathcal{T}_{b^*}V^*)(s_2) = (\mathcal{T}_{b^*}V^*)(s_3) = (\mathcal{T}_{b^*}V^*)(s_4),
    \end{align*}
in which $s_1, s_2, s_3, s_4$ are the four states defined in \eqref{eq:four_states}. 
    Substituting the above results into Equation~\eqref{eq:varphi} and writing $\eta := \eta(\theta_1 - \theta_2)$, we obtain
    \begin{align*}
        \label{N=2_b}
        \bar{b}^* &= \delta \left[ (1-\eta) V^*(s_4) + \eta \, V^*(s_3) - (1-\eta)V^*(s_2) - \eta \, V^*(s_1)
        \right] \\
        & = \delta \left[ (1 - \eta) \, r_{s_4}^{\bc}({b^*}) + \eta  \, r_{s_3}^{\bc}({b^*}) - (1-\eta) r_{s_2}^{\bc}({b^*}) - \eta \, r_{s_1}^{\bc}({b^*})\right] \\
        & = \delta (\kappa-1)\left[\eta \, \theta_1+(1-\eta) \theta_2 \right].
    \end{align*}
\end{proof}

\subsection{Proof of Theorem \ref{thm:c_op}}

Under the optimal barrier reporting strategy $b^*$, the insured’s state process constitutes a time-homogeneous Markov chain with a finite state space. As a result, the cross-sectional distribution of customer states converges to a unique stationary distribution, which can be obtained by solving \[
(I-\mathcal{T}_{b^*}^\top) \, p = 0,
\]
together with the normalization condition $\mathbf{1}^\top p = 1$. Thus the stationary distribution is given by
\begin{align*}
   p(s_1;\theta_1,\theta_2) &=  \frac{\eta \, d_1}{d_1+d_2}=\eta \, a, \quad \quad \quad \, \, \, 
   p(s_2;\theta_1,\theta_2) =\frac{\eta \, d_1(1-\eta)}{d_1+d_2}=(1-\eta)a, \\
   p(s_3;\theta_1,\theta_2) &=\frac{b_2 d_2}{d_1+d_2}=\eta \, (1-a), \quad 
   p(s_4;\theta_1,\theta_2) = \frac{(1-b_2)d_2}{d_1+d_2}=(1-\eta)(1-a),
\end{align*}
where $a = \Pb[L\le b^*], \, b^* = \delta (\kappa-1) \left[ \eta \, \theta_1 + (1-\eta) \theta_2 \right], \, \eta$ is defined in \eqref{eq:eta}. 

Having completed the preparatory steps described above, in what follows,  adopting the perspective of the insurance company,  we prove the existence of the equilibrium stated in Definition~\ref{def_c} based on Theorem~\ref{thm:bstar}.  We first examine the maximizers of $J^1$ and $J^2$.

Fix an arbitrary $i \in \Ic, \theta_i \in \Theta$, we aim to show that the function $J^i(\theta_i; \theta_{3-i})$ admits a maximizer on $\Theta$,  and that this maximizer is unique. The argument proceeds by first analyzing the local curvature of $J^i$ at any stationary point and then combining this local result with a global monotonicity argument.

To this end,  we begin by characterizing the second-order behavior of $J^1$ at a critical point. The following lemmas provide sufficient conditions under which a stationary point is indeed a point of strict local maximum.

\begin{lemma}
    \label{f1<}
Let $\theta_2 \in \Theta$,  and suppose that there exists $\theta_1 \le \theta_2$ such that \[
\frac{\partial J^1}{\partial \theta_1}(\theta_1; \theta_2) = 0.\] 
Then 
$\frac{\partial^2 J^1}{\partial \theta_1^2}(\theta_1; \theta_2) < 0$
if  the following conditions hold:
\begin{itemize}
\item[(i)] \quad 
$\displaystyle 
\sup_{\ell \in I_L}
\frac{f_L'(\ell)}{f_L(\ell)}
\in \Big[-\frac{\ee k_1}{2 \ee -1}, \frac{k_1}{2}\Big]$ with $I_L : =\left[\delta (\kappa-1) \Eb[L], \delta (\kappa-1) \frac{M}{\kappa}\right] $;
\item[(ii)] \quad 
$\displaystyle 
(1-\delta) \frac{M}{\kappa} k_1 \le A_1$,
\end{itemize}
where the $f_L(\ell)$ denotes the density function of the loss $L$ on the positive support and $A_1$ is defined in \eqref{eq:m}.
\end{lemma}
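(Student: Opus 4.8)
The plan is to compute $J^1(\theta_1;\theta_2)$ explicitly on the region $\theta_1 \le \theta_2$ using the stationary distribution derived just before the lemma, then differentiate twice and show that, at any stationary point, the leading negative term dominates all the rest. Recall that on $\{\theta_1 \le \theta_2\}$ the choice function is $\eta = \eta(\theta_1-\theta_2) = 1-(1-k_2)\ee^{k_1(\theta_1-\theta_2)}$, the barrier is $b^* = \delta(\kappa-1)[\eta\theta_1 + (1-\eta)\theta_2]$, and $a = \Pb[L\le b^*] = F_L(b^*)$. Plugging the stationary proportions $p(s_1;\cdot)=\eta a$, $p(s_2;\cdot)=(1-\eta)a$ into \eqref{eq:f_obj} with premiums $c_1^1 = \theta_1$, $c_2^1 = \kappa\theta_1$ collapses $J^1$ to a one–dimensional function of $\theta_1$ of the form $J^1 = \eta\big[\theta_1 a + \kappa\theta_1(1-a) - \Eb[L\,\mathbf 1_{\{L>b^*\}}](a + (1-a))\big]$ — more precisely $J^1(\theta_1;\theta_2) = \eta(\theta_1-\theta_2)\big[\theta_1 + (\kappa-1)\theta_1(1-a) - \Eb[L\mathbf 1_{\{L>b^*\}}]\big]$, which I would simplify further using $\Eb[L\mathbf 1_{\{L>b^*\}}] = \Eb[L] - \Eb[L\mathbf 1_{\{L\le b^*\}}]$.

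Next I would differentiate. Writing $g(\theta_1)$ for the bracketed profit-per-customer term and noting that $b^*$ depends on $\theta_1$ both directly and through $\eta$, the chain rule gives $\partial_{\theta_1} b^* = \delta(\kappa-1)[\eta + (\theta_1-\theta_2)\eta']$ where $\eta' = k_1(1-k_2)\ee^{k_1(\theta_1-\theta_2)} = k_1(1-\eta)$. The first derivative is $\partial_{\theta_1}J^1 = \eta' g + \eta g'$, and the second is $\partial^2_{\theta_1}J^1 = \eta'' g + 2\eta' g' + \eta g''$. The key simplification: at a stationary point $\eta' g + \eta g' = 0$, so $g' = -(\eta'/\eta)g$, which lets me eliminate $g'$ and write $\partial^2_{\theta_1}J^1\big|_{\text{crit}} = \big(\eta'' - 2(\eta')^2/\eta\big)g + \eta g''$. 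Since $\eta'' = k_1\eta' = k_1^2(1-\eta)$ and everything is explicit, the sign reduces to controlling $g$ and $g''$. The term $g''$ contains $f_L'(b^*)$ and $f_L(b^*)$ (from differentiating $a = F_L(b^*)$ twice), which is exactly why conditions (i) and (ii) enter: condition (i) bounds $f_L'/f_L$ so that the $f_L'$ contribution cannot overwhelm the rest, and condition (ii) together with the definition of $A_1$ bounds the magnitude of the $(1-\delta)(M/\kappa)k_1$–type terms coming from $g$ against the negative curvature $-2(\eta')^2/\eta$ built into the coefficient of $g$. I would carry the estimates through on the interval $I_L$ (which is where $b^*$ lives, since $\theta_1,\theta_2\in\Theta$ forces $b^* \in [\delta(\kappa-1)\Eb[L],\ \delta(\kappa-1)M/\kappa]$) and show the sum is strictly negative.

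The main obstacle will be bookkeeping: $J^1$ as a function of $\theta_1$ is a product of $\eta$, a linear-in-$\theta_1$ factor, and terms involving $F_L(b^*)$ and $\Eb[L\mathbf 1_{\{L\le b^*\}}]$, with $b^*$ itself a nonlinear function of $\theta_1$, so $g''$ has several competing pieces. The real work is organizing these into a form where the negative piece $-2(\eta')^2 g/\eta$ (which is genuinely negative only because $g>0$ — itself something I must verify, using $\theta_1 \ge \Eb[L] = \Eb[L\mathbf 1_{\{L>b^*\}}] + \Eb[L\mathbf 1_{\{L\le b^*\}}]$ and $\theta_1 \ge \Eb[L]$ from $\Theta$) dominates. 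I expect to bound $|g|$ above by something like $(1-\delta+\delta/\kappa)M \le M$ times a constant, bound $|g''|$ using the slowly-varying condition (i) so that $|f_L'(b^*)| \le \frac{k_1}{2}f_L(b^*)$ (or the lower bound $-\frac{\ee k_1}{2\ee-1}f_L(b^*)$), and then check the final scalar inequality, which is where the precise numbers $A_1 = k_2(2-\delta k_2)/(2-k_2)$ and the threshold $(1-\delta)(M/\kappa)k_1 \le A_1$ are calibrated to make everything close. If any sign of $g$ or monotonicity fails in a corner of $I_L$, I would handle it by noting the problem constraints ($\kappa\in(1,2)$, $\delta\in(0,1)$, $k_1,k_2\in(0,1)$) restrict the ranges enough, and fall back on the crude bound $f_L \le 1-p_0 < 1$ used elsewhere in the paper.
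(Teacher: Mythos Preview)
Your decomposition $J^1 = \eta \cdot g$ with $g(\theta_1)=\theta_1[a+\kappa(1-a)]-\Eb[L\mathbf 1_{\{L>b^*\}}]$, followed by the product rule and elimination via the first-order condition, is exactly the paper's route; the paper eliminates $g$ rather than $g'$, writing $\partial^2_{\theta_1}J^1\big|_{\text{crit}} = -k_1(2-\eta)g' + \eta g''$, but this is the same expression as yours once one uses $g'=k_1(1-\eta)g/\eta$ at a critical point. One concrete slip: on $\theta_1\le\theta_2$ you have $\eta' = -k_1(1-\eta)$, not $+k_1(1-\eta)$. With your sign the coefficient $\eta''-2(\eta')^2/\eta$ equals $k_1^2(1-\eta)(3\eta-2)/\eta$, which is \emph{positive} whenever $\eta>2/3$ (e.g.\ any $k_2>2/3$), so your ``dominating negative term'' would not be negative.

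The substantive gap is your treatment of $\eta g''$. You frame it as bookkeeping---bound $|g|$, bound $|g''|$ via condition~(i), compare---but the actual argument is structural, not a size comparison. The paper splits on the sign of $h_1:=-\theta_1+\delta[\eta\theta_1+(1-\eta)\theta_2]=(\kappa-1)^{-1}[\theta_1(1-\kappa)+b^*]$, because this sign determines which pieces of $g''$ help. When $h_1\ge 0$ the paper does \emph{not} show $-k_1(2-\eta)g'$ dominates $\eta g''$; it instead groups the $q_1^{11}$-part of $\eta g''$ together with $-k_1(2-\eta)g'$ and proves that combination negative (with a further sub-split on $\theta_1\gtrless\theta_2-2/k_1$, which controls the sign of $\partial^2_{\theta_1}b^*$). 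When $h_1<0$ the paper shows $\eta g''\le 0$ by reducing to an auxiliary function $g_0(B_0)=\delta B_0+k_2(2-\delta k_2)/B_0-2$ on $B_0\in[k_2,\,1+(1-k_2)/\ee^2]$; the constant $A_1=k_2(2-\delta k_2)/(2-k_2)$ in condition~(ii) is \emph{precisely} calibrated so that $g_0(k_2)=0$---it is the exact threshold for this endpoint check, not a crude bound on $|g|$. Your plan to bound $|g|$ by a multiple of $M$ and fall back on $f_L\le 1-p_0$ would not recover this; that density bound is not used anywhere in the lemma's proof.
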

\begin{proof}
    If $\theta_1 \le \theta_2$, then\[\eta = 1-(1-k_2) \ee^{-k_1(\theta_2-\theta_1)}, \quad \frac{\partial \eta}{\partial \theta_1}=k_1(\eta-1).\]
    The first-order condition becomes
    \begin{equation}
    \label{eq:1-order<}
         k_1 (1-\eta) \{ \theta_1 [a+\kappa (1-a)] - \Eb [L\mathbf{1}_{\{L > b^*\}}]\} = \eta \left[ a + \kappa (1-a) + (\theta_1 (1-\kappa) + b^*) q_1^1 \right],
    \end{equation}
    where 
    \begin{equation}
       a = \Pb[L\le b^*],\, q_1^1 = \frac{\partial a}{\partial \theta_1} = f_L(b^*) \frac{\partial b^*}{\partial \theta_1} = f_L(b^*) \delta (\kappa-1) [\eta + k_1 (\theta_2 - \theta_1)(1 - \eta)]>0.
    \end{equation}
We now examine the second derivative:
    \begin{align*}
        \frac{\partial^2 J^1}{\partial \theta_1^2} = &
\underbrace{-k_1 (2 - \eta) \left \{ \left[a + \kappa (1-a) \right]+(\theta_1 (1-\kappa) + b^*) q_1^1 \right\}}_{I_1}\\
&+ \underbrace{ \eta \left(\delta[\eta + k_1 (\eta-1) (\theta_1-\theta_2)]-2 \right)(\kappa-1) q_1^1}_{I_2}+ \underbrace{\eta \left[\theta_1 (1-\kappa ) + b^* \right] q_1^{11}}_{I_3},
    \end{align*}
where $q_1^{11} = \frac{\partial q_1^1}{\partial \theta_1}$. To prove $ \frac{\partial^2 J^1}{\partial \theta_1^2}<0$,  define  $h_1(\theta_1;\theta_2) = -\theta_1+\delta[\eta \, \theta_1 + (1-\eta) \theta_2]$,  we distinguish two situations based on the sign of $h_1$. Before that, we have 
\begin{align}
\label{b'}
    \frac{\partial b^*}{\partial \theta_1}  & =  \delta (\kappa-1) [\eta + k_1 (\theta_2 - \theta_1)(1 - \eta)] \le \delta (\kappa-1)\left( 1+\frac{1-k_2}{\ee^2}\right).
\end{align}

\medskip
\noindent
\textbf{Case 1:} \ $h_1 \ge 0.$

In this case, we apply Condition (i) to analyze the sign of $I_1+I_3$. Using inequality~\eqref{b'} and Condition (i), we have
\begin{align}
    \eta \, q_1^{11}&-q_1^1 k_1 (2-\eta)
    \le \left(\frac{7}{4} \eta -2 \right) k_1 q_1^1 + \eta \, f_L(b^*) \frac{\partial^2 b^*}{\partial \theta_1^2} .
    \label{I1+I3}
\end{align}
If $\theta_1 \in (\theta_2-\frac{2}{k_1}, \theta_2]$,  then using inequality~\eqref{I1+I3}, we obtain
\begin{align*}
\eta \, q_1^{11}&-q_1^1 k_1 (2-\eta) \le \left(\frac{7}{4} \eta -2 \right) k_1 q_1^1 + \delta (\kappa-1) f_L(b^*) k_1 \eta (\eta-1) \left[2 + k_1 (\theta_1 - \theta_2)\right] <0.
\end{align*}

If instead  $\theta_1 \le \theta_2-\frac{2}{k_1}$,  then $\eta \in [1-\frac{1-k_2}{\ee^2}, 1)$. As such 
\begin{align}
    \eta \, q_1^{11} & - q_1^1 k_1 (2-\eta) \le \delta (\kappa-1) k_1 f_L(b^*) \eta \left[ \left(\frac{7}{4}\eta-2 \right) \left[\eta + k_1 (\theta_2-\theta_1)(1-\eta) \right]+\frac{1-k_2}{\ee^3}\right] \\
    & < \delta (\kappa-1) k_1 f_L(b^*) \eta \left(-\frac{1}{4}+\frac{1-k_2}{\ee^3} \right) < 0.
    \label{eq:h1>=0}
\end{align}
It follows that  \[
I_1 + I_3 < (\kappa-1) h_1 (\eta \, q_1^{11} - q_1^1k_1(2-\eta))<0.
\]
Similarly,  define $h_0(\theta_1;\theta_2) = \delta \left[\eta + k_1 (\eta-1) (\theta_1-\theta_2) \right]-2$. Using $\ell\ee^{-\ell} \leq \ee^{-1}$,  we obtain 
\begin{align*}
     h_0 \leq \delta \left( 1 + \frac{1-k_2}{\ee} \right)-2 < 2 \delta -2<0 , 
\end{align*}
which implies $I_2 = \eta (\kappa-1) q_1^1 h_0 <0.$
Thus $
\frac{\partial^2 J^1}{\partial \theta_1^2} = I_1 + I_2 + I_3 < 0.$

\medskip
\noindent
\textbf{Case 2:} $h_1 < 0.$ 

If $\theta_1 \in \left[ \theta_2 - \frac{2}{k_1}, \theta_2 \right]$, as \[\frac{\partial^2 b^*}{\partial \theta_1^2} = \delta k_1 (\kappa-1) (1-\eta) [k_1 (\theta_2 - \theta_1)-2] \le  0, \, h_1 \geq (\delta-1) \frac{M}{\kappa},\]  combining with inequality~\eqref{b'},  Condition (i) and Condition (ii) yields
\begin{align}
    \frac{I_2 + I_3}{\eta (\kappa-1) q_1^1} &= h_0 + h_1 \left[ \frac{f_L'(b^*)}{f_L(b^*)} \frac{\partial b^*}{\partial \theta_1} + k_1 + \frac{k_1 (\eta-2)}{\eta + k_1 (\eta-1) (\theta_1-\theta_2)} \right]  \notag\\
    & \leq h_0 + (1-\delta) \frac{M}{\kappa} k_1 \left( \frac{\ee}{2 \ee -1}\delta (\kappa-1)\left( 1+\frac{1-k_2}{\ee^2}\right) - 1 + \frac{2-\eta}{\eta + k_1 (\eta-1)(\theta_1 - \theta_2)} \right )\notag\\
    & <  h_0 + k_2 (2-\delta k_2)\frac{1}{\eta + k_1 (\eta-1) (\theta_1-\theta_2)}.
    \label{eq:h_1<0}
\end{align}
Denote $B_0 = \eta + k_1 (\eta-1) (\theta_1-\theta_2) \in \left[k_2, 1+\frac{1-k_2}{\ee^2} \right]$, then the RHS of Inequality \eqref{eq:h_1<0} can be written as \[
\delta B_0 + \frac{k_2 (2-\delta k_2)}{B_0} - 2:=g_0(B_0).
\]
We now verify that $g_0(B_0) \le 0$ on its domain. Applying the properties of the concave function,  it suffices to verify that the above expression is non-positive at the endpoints with respect to $B_0$.
\begin{align*}
    &g_0(k_2) = \delta k_2+(2-\delta k_2) - 2 = 0, \\
    &g_0\left(1+\frac{1-k_2}{\ee^2}\right) = \delta \left(1 + \frac{1-k_2}{\ee^2} \right)+\frac{k_2 (2-\delta  k_2)}{1 + \frac{1-k_2}{\ee^2}}  - 2 := \frac{\tilde{g}_0\left( k_2 \right)}{ 1+\frac{1-k_2}{\ee^2}},\\
    & \tilde{g}_0'(k_2) = 2(1-\delta k_2) + \frac{2-2 \delta}{\ee^2} + \frac{2 \delta (k_2 -1)}{\ee^4}>0,\,g_0\left(1+\frac{1-k_2}{\ee^2}\right) \le \frac{\tilde{g}_0 (1)}{1+\frac{1-k_2}{\ee^2}} = 0.
\end{align*}
Thus $g_0(B_0) \le 0$ for all admissible $B_0$,  implying $I_2 + I_3 \le 0$. 

If instead $\theta_1 <  \theta_2 - \frac{2}{k_1}$, as $\frac{\partial^2 b^*}{\partial \theta_1^2} >0$, combining with inequality~\eqref{b'} and Condition (i), we obtain
\begin{align}
    \frac{I_2 + I_3}{\eta (\kappa-1) q_1^1} & < h_0 + h_1 \frac{f_L'(b^*)}{f_L(b^*)}\frac{\partial b^*}{\partial \theta_1} \\
    & < \frac{1 }{\kappa -1} \frac{\partial b^*}{\partial \theta_1} -2 + (1-\delta) k_1 \frac{M}{\kappa}\frac{\ee}{2 \ee-1} \frac{\partial b^*}{\partial \theta_1}\\
    & < \left( \frac{1}{\kappa -1}
    +(2-\delta) \frac{\ee}{2 \ee -1} \right)\frac{\partial b^*}{\partial \theta_1}-2 <\frac{3 \ee -1}{2 \ee -1} \left(1+ \frac{1}{\ee^2}\right) -2 < 0.
\end{align}
That is, $I_2 + I_3 < 0$.  Together with Equation \eqref{eq:1-order<}:  \[
I_1 = -k_1^2 \frac{(1-\eta)(2-\eta)}{\eta} \left\{ \theta_1 [a+\kappa(1-a)] - \Eb [L \mathbf{1}_{\{L > b^*\}}] \right\} < 0, \]
we again obtain $\frac{\partial^2 J^1}{\partial \theta_1^2} < 0$. Hence,  the second derivative is strictly negative if $\frac{\partial J^1}{\partial \theta_1}=0$.
\end{proof}

\begin{lemma}
    \label{f1>}
Under the assumptions of Lemma~\ref{f1<} (Conditions (i)–(ii) in Lemma~\ref{f1<}), 
 let $\theta_2 \in \Theta$,  and suppose that  there exists $\theta_1 \ge \theta_2$ such that \[
\frac{\partial J^1}{\partial \theta_1}(\theta_1; \theta_2) = 0.\] 
Then  $
\frac{\partial^2 J^1}{\partial \theta_1^2}(\theta_1; \theta_2) < 0$
if the following condition holds:
\begin{itemize}
\item[(iii)] \quad 
$ \displaystyle 
\sup_{\ell \in I_L}  f_L(\ell) \le \mm_1$ with $I_L : =[\delta (\kappa-1) \Eb[L], \delta (\kappa-1) \frac{M}{\kappa}]$.

\end{itemize}
where $f_L(\ell)$ denotes the density function of $L$ on the positive support and $\mm_1$ is defined in \eqref{eq:m}.
\end{lemma}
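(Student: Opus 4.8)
The plan is to mirror the structure of the proof of Lemma~\ref{f1<}, but now in the regime $\theta_1 \ge \theta_2$, where the choice function takes the form $\eta = k_2 \ee^{-k_1(\theta_1-\theta_2)}$, so that $\partial \eta / \partial \theta_1 = -k_1 \eta$ and $\eta \in (0, k_2]$. First I would write down the first-order condition $\partial J^1/\partial \theta_1 = 0$ explicitly under this parametrization, isolating the stationarity relation analogous to \eqref{eq:1-order<}; this will express one of the bracketed quantities (e.g. $\theta_1[a + \kappa(1-a)] - \Eb[L\mathbf 1_{\{L>b^*\}}]$, or its counterpart) in terms of the others, which is what makes the sign of $I_1$ controllable at a critical point. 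Then I would differentiate once more to obtain $\partial^2 J^1/\partial \theta_1^2 = I_1 + I_2 + I_3$ with the three pieces grouped exactly as in Lemma~\ref{f1<}: $I_1$ collecting the terms from differentiating the $\eta$-prefactor, $I_2$ the terms proportional to $q_1^1 = \partial a/\partial\theta_1$, and $I_3$ the term carrying $q_1^{11} = \partial^2 a/\partial\theta_1^2$.

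Next I would record the elementary bounds needed in this regime: $\partial b^*/\partial\theta_1 = \delta(\kappa-1)[\eta - k_1(\theta_1-\theta_2)\eta + (1-\eta)]$ — note that here the $\theta_1-\theta_2 \ge 0$ term enters with the opposite sign compared to Case~1 of Lemma~\ref{f1<} — together with the uniform estimate $\partial b^*/\partial\theta_1 \le \delta(\kappa-1)(1 + k_2/\ee)$ obtained from $x\ee^{-x} \le \ee^{-1}$, and the corresponding bound on $\partial^2 b^*/\partial\theta_1^2$. As in the previous lemma I expect to split into the cases $h_1 \ge 0$ and $h_1 < 0$, where $h_1(\theta_1;\theta_2) = -\theta_1 + \delta[\eta\theta_1 + (1-\eta)\theta_2]$. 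In the case $h_1 \ge 0$ I would bound $I_1 + I_3$ (or $I_2+I_3$) by a negative multiple of $h_1$ using the log-derivative control from Condition~(i) of Lemma~\ref{f1<} and the smallness of $f_L$ from Condition~(iii); in the case $h_1 < 0$ I would instead use the stationarity relation to pin down the sign of $I_1$ directly (as in the display after \eqref{eq:h_1<0}) and bound $I_2 + I_3$ from above by an expression of the schematic form $\delta B_0 + \text{const}/B_0 - 2$ for an appropriate auxiliary variable $B_0$ lying in a bounded interval, then verify nonpositivity at the endpoints by concavity in $B_0$, exactly the $g_0$-style argument.

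The main obstacle, and the reason a separate lemma is needed rather than a one-line appeal to symmetry, is that the $\theta_1 \ge \theta_2$ branch of $\eta$ is \emph{not} the reflection of the $\theta_1 \le \theta_2$ branch — the constants $k_2$ and $1-k_2$ play asymmetric roles — so the auxiliary quantities $A_1, \mm_1$ that sufficed before are exactly what must reappear here, and the delicate point is checking that Condition~(iii)'s bound $\sup_{\ell\in I_L} f_L(\ell) \le \mm_1$ is strong enough to kill the positive contribution of $I_3$ (which now contains $q_1^{11}$ with a sign that need not be favorable) over the \emph{entire} range $\eta \in (0,k_2]$, including the problematic boundary $\eta \to 0$ where $b^*$ is most sensitive to $\theta_1$. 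I would allocate most of the care to the $h_1 < 0$ sub-case and to verifying that the endpoint values of the $g_0$-type function are nonpositive precisely under $\mm_1$ and $A_1$ as defined in \eqref{eq:m}; the $h_1 \ge 0$ sub-case and the continuity/monotonicity wrap-up should then be routine.
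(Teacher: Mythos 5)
Your plan transplants the architecture of Lemma~\ref{f1<} into the regime $\theta_1\ge\theta_2$, but two of its load-bearing pieces do not survive the transplant. First, the case split on the sign of $h_1$ is vacuous here: since $\eta\theta_1+(1-\eta)\theta_2\le\theta_1$ when $\theta_1\ge\theta_2$, one always has $h_1=-\theta_1+\delta[\eta\theta_1+(1-\eta)\theta_2]\le-(1-\delta)\theta_1<0$. The split that actually matters in this regime is on the sign of $1-k_1(\theta_1-\theta_2)$, because $\partial b^*/\partial\theta_1=\delta(\kappa-1)\eta[1-k_1(\theta_1-\theta_2)]$ (your formula, with the extra $(1-\eta)$ term, belongs to the other branch of $\eta$ and is incorrect here), so $q_1^1$ changes sign at $\theta_1=\theta_2+1/k_1$; this is what drives the paper's two sub-cases for the dominant block of the second derivative. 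Correspondingly, the paper does not reuse the $I_1+I_2+I_3$ grouping or the $g_0(B_0)$ concavity-at-endpoints trick; it regroups the second derivative as $\eta(I_4+I_5)$ and bounds $I_4$ by functions $g_1(b^*)$, $g_2(b^*)$ shown to be increasing with limit $0$ as $b^*\to+\infty$.

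Second, and more importantly, your proposal contains no mechanism for the step where Condition~(iii) is actually consumed. The hard term is $(\kappa-1)h_1 f_L(b^*)\,\partial^2 b^*/\partial\theta_1^2$ on $\theta_1\in[\theta_2,\theta_2+2/k_1]$; outside that interval $\partial^2 b^*/\partial\theta_1^2>0$ and the term is harmless because $h_1<0$, and your worry about the boundary $\eta\to0$ is misplaced, since there $\partial b^*/\partial\theta_1\to0$. The paper controls this product by partitioning $[\theta_2,\theta_2+2/k_1]$ into $2z$ equal subintervals, exploiting that $|h_1|$ and $f_L(b^*)\partial^2 b^*/\partial\theta_1^2$ are monotone in opposite senses, maximizing an auxiliary function $g_3$ on each piece, and letting $z\to\infty$; this is precisely where the constant $(2A_1)\vee(2\delta-A_1)$ in the definition of $\mm_1$ arises, so that $\sup_{\ell\in I_L}f_L(\ell)\le\mm_1$ yields $I_5\le0$. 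Without this device, or an equivalent uniform estimate, the hypothesis $\sup f_L\le\mm_1$ is never connected to the conclusion, so the proposal as written has a genuine gap.
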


\begin{proof}
    In this regime, \[
\eta = k_2 \ee^{-k_1 (\theta_1-\theta_2)}, \quad \frac{\partial \eta}{\partial \theta_1} = -k_1 \eta.
\]
The first-order condition becomes 
\begin{equation}
\label{equ:1-order>}
     k_1 \{\theta_1 [a + \kappa (1-a)]-\Eb [L \mathbf{1}_{\{L > b^*\}}]\} = a + \kappa (1-a) + [\theta_1 (1-\kappa) + b^*] q_1^1, 
\end{equation}
where \[
 a = \Pb[L \le b^*],\, q_1^1 = \frac{\partial a}{\partial \theta_1} = f_L(b^*) \frac{\partial b^*}{\partial \theta_1} =  f_L(b^*) \delta (\kappa - 1) \eta [1-k_1 (\theta_1-\theta_2)],\]
 the second derivative can be written as 
\begin{align*}
       \frac{1}{\eta} \cdot \frac{\partial^2 J^1}{\partial \theta_1^2} = &
        \underbrace{q_1^1 (\kappa-1) \left\{\left(  \frac{f_L'(b^*)}{f_L(b^*)} \frac{\partial b^*}{\partial \theta_1} - k_1\right) h_1 + \delta \, \eta [1 - k_1 (\theta_1-\theta_2)] - 2 \right\}-k_1 (\kappa-1) (1-a)}_{I_4}\\
      & \underbrace{ - k_1 + (\kappa-1) h_1 f_L(b^*) \frac{\partial^2b^*}{\partial  \theta_1^2}}_{I_5} .
    \end{align*}
where $h_1(\theta_1;\theta_2) = -\theta_1+\delta[\eta \, \theta_1 + (1-\eta) \theta_2]$. We now show separately that $I_4 < 0$ and $I_5 \le 0$ under the stated conditions.

\medskip
\noindent
\textbf{Step 1:} \ $I_4 < 0.$

We begin  by distinguishing between the following two cases,  according to the sign of  \(q_1^1 \). 

If $ \theta_1 \in \left[ \theta_2+\frac{1}{k_1}, +\infty \right)$, then \(q_1^1 \leq 0\),  combining with Condition (i),   we obtain 
\begin{align}
\label{7}
    \left( \frac{f_L'(b^*)}{f_L(b^*)} \frac{\partial b^*}{\partial \theta_1}- k_1 \right) h_1 & > - \frac{k_1}{2} h_1 = \frac{1-\delta}{2} \theta_1 k_1 + k_1 (\theta_1-\theta_2) \frac{\delta (1-\eta)}{2} \geq \frac{1}{2}-\frac{\delta k_2}{2 \ee}, 
\end{align}
in which the last inequality follows from \(\theta_1 k_1>1\) and \((1-\eta) k_1 (\theta_1-\theta_2) \geq 1 - \frac{k_2}{\ee}.\)

Denote \(y = \delta \eta [1-k_1 (\theta_1-\theta_2)]\),  then \(y \in \left[ -\frac{\delta k_2}{\ee^2}, 0 \right).\)  Combining with Inequality \eqref{7}, we obtain 
\begin{align*}
    I_4 &\leq f_L(b^*) (\kappa-1)^2 y \left(y - \frac{3}{2} - \frac{\delta k_2}{2 \ee}\right) - k_1 (\kappa-1) (1-a) \\
    &\leq f_L(b^*) (\kappa-1)^2 \frac{\delta k_2}{\ee^2} \left( \frac{\delta k_2}{\ee^2} + \frac{\delta k_2}{2 \ee} + \frac{3}{2} \right) - k_1 (\kappa-1) (1-a) \\
    &:=g_1(b^*).\\
    \frac{\partial g_1(b^*)}{\partial b^*}& = f_L(b^*) \left[ \frac{f_L'(b^*)}{f_L(b^*)} (\kappa-1)^2 \frac{\delta k_2}{\ee^2} \left( \frac{(2 + \ee) \delta k_2}{2 \ee^2}+\frac{3}{2}\right)+k_1 (\kappa-1)\right]\\
    &\geq f_L(b^*) k_1 (\kappa-1) \left[1-\frac{\ee}{2 \ee-1}(\kappa-1) \frac{\delta k_2}{\ee^2} \left( \frac{(2+\ee)\delta k_2}{2\ee^2} + \frac{3}{2} \right) \right] > 0.    
\end{align*}
Thus 
 $ g_1(b^*) < \lim \limits_{b^*\rightarrow + \infty} g_1(b^*)=0,$
i.e., \(I_4 <0.\)

If \(\theta_1 \in \left[\theta_2, \theta_2 + \frac{1}{k_1} \right) \),  then \(q_1^1 > 0\). Combining with Condition (i) and Condition (iii),  we obtain 
\begin{align*}
    \left( \frac{f_L'(b^*)}{f_L(b^*)} \frac{\partial b^*}{\partial \theta_1} - k_1 \right) h_1 & \leq \left(1+\frac{\ee \delta (\kappa-1) k_2}{2\ee-1} \right) k_1 \left[(1-\delta) \frac{M}{\kappa} + \delta (1-\eta)(\theta_1-\theta_2) \right] \\
    &\leq \frac{3 \ee-1}{2 \ee-1} \left[ \frac{k_2 (2-\delta k_2)}{2 - k_2} + \delta \left(1-\frac{k_2}{\ee} \right) \right].
\end{align*}
Noting that in this case,  \(y \in (0, \delta k_2]\), we have 

\begin{align}
    I_4 &\leq f_L(b^*) (\kappa-1)^2 y \left\{y + \frac{3\ee-1}{2 \ee-1} \left[ \frac{k_2 (2-\delta k_2)}{2-k_2} + \delta \left(1-\frac{k_2}{\ee} \right) \right] -2\right\} \\
    & \quad \, - k_1 (\kappa-1) (1-a) \\
    &\leq f_L(b^*) (\kappa-1)^2 y \left\{\delta k_2 + \frac{3\ee-1}{2 \ee-1} \left[ \frac{k_2 (2-\delta k_2)}{2-k_2} + \delta \left(1-\frac{k_2}{\ee} \right) \right] -2\right\}\\
     & \quad \, - k_1 (\kappa-1) (1-a)\label{8}\\
    &\leq f_L(b^*) (\kappa-1)^2 y \left\{\delta + \frac{3\ee-1}{2 \ee-1} \left(2-\frac{\delta}{e}  \right) -2 \right\} - k_1 (\kappa-1) (1-a)\notag\\
    &\leq f_L(b^*) (\kappa-1)^2 \delta k_2 \left\{ \delta  + \frac{3\ee-1}{2 \ee-1} \left(2-\frac{\delta}{e}  \right) -2 \right\}-k_1 (\kappa-1) (1-a) \notag\\
    &:=g_2(b^*).
    \end{align}
    \begin{align}
    \frac{\partial g_2(b^*)}{\partial b^*} &= f_L(b^*)(\kappa - 1) \left\{\frac{f_L'(b^*)}{f_L(b^*)}(\kappa-1)\delta k_2 \left[ \delta + \frac{3\ee-1}{2 \ee-1} \left(2-\frac{\delta}{e}  \right) -2 \right]\!+\!k_1 \right\}\\
    &\geq f_L(b^*) (\kappa-1) k_1 \left\{ 1-(\kappa-1) \delta k_2 \frac{\ee}{2\ee-1} \left[ \delta + \frac{3\ee-1}{2 \ee-1} \left(2-\frac{\delta}{e}  \right) -2 \right] \right\} \notag\\
    &\geq f_L(b^*) (\kappa-1 )k_1 \left[1-(\kappa-1) \delta \frac{\ee}{2 \ee - 1} \left( \frac{2 \ee^2 -4\ee +1}{(2 \ee -1) \ee} \delta + \frac{2 \ee}{2 \ee -1}\right)\right]  \notag\\
    & \ge f_L(b^*) (\kappa-1 )k_1 (2 - \kappa) > 0, \notag
\end{align}
where  the first inequality follows from that the content inside the braces in Expression \eqref{8} is monotonically increasing with respect to \(k_2\). Thus we have
$ 
 g_2(b^*) \leq \lim\limits_{b^* \rightarrow +\infty} g_2(b^*) = 0,
$ i.e.,  \(I_4 < 0.\)

\medskip
\noindent
\textbf{Step 2:} \ $I_5 \le 0.$ 

It is immediately that  if $\theta_1 \in \left( \theta_2 + \frac{2}{k_2}, +\infty \right)$,  then $\frac{\partial^2 b^*}{\partial \theta_1^2} > 0$, and consequently $I_5<0$ follows. Hence, it suffices to consider the case corresponding to \(\theta_1 \in \left[ \theta_2, \theta_2 + \frac{2}{k_1} \right].\)  Observe that using Condition (i)
\begin{align*}
    \frac{\partial \left (f_L(b^*) \frac{\partial^2 b^*}{\partial \theta_1^2} \right)}{\partial \theta_1} & = f_L(b^*) \Bigl[ \delta (\kappa-1) k_1^2 \eta [3 - k_1 (\theta_1-\theta_2)] \\
    & + \delta^2 (\kappa-1)^2 k_1 \eta^2 [k_1 (\theta_1-\theta_2) - 2][1-k_1(\theta_1-\theta_2)]\frac{f_L'(b^*)}{f_L(b^*)}\Bigl] \\
    &\ge f_L(b^*) \delta (\kappa-1) k_1^2 \eta \left[ 1-\delta (\kappa-1) \eta \right] > 0.\\
    \frac{\partial h_1}{\partial \theta_1} & = \delta \eta - 1 - \delta \eta k_1 (\theta_1-\theta_2) < 0.
\end{align*}
Let $\left[ \theta_2, \theta_2+\frac{2}{k_1} \right]$ be divided into $2 z$ equal parts, $z\in \mathbb{N}=\{1, 2, \cdots\} $. For $0 \leq m \leq 2 z - 1$ and $m \in \mathbb{N}$, using Condition(ii), it follows that on the interval $\left[ \theta_2 + \frac{m}{z k_1}, \theta_2 + \frac{m+1}{z k_1}\right]$,  
\begin{align*}
    (\kappa-1) h_1 f_L(b^*) \frac{\partial^2 b^*}{\partial \theta_1^2} & \leq (\kappa-1) \left|h_1(\theta_2 + \frac{m+1}{z k_1} ;\theta_2)\right| \cdot \left|f_L(b^*) \frac{\partial^2 b^*}{\partial \theta_1^2}(\theta_2 + \frac{m}{z k_1} ; \theta_2)\right|\\
    &=f_L(b^*) \delta (\kappa-1)^2 k_2 \ee^{-\frac{m}{z}} \left(2-\frac{m}{z} \right) \left[ k_1 (1-\delta) \theta_1 + \frac{m+1}{z} \delta (1- k_2 \ee^{-\frac{m+1}{z}}) \right]\\
    &\le f_L(b^*) \delta (\kappa-1)^2 k_2 \ee^{-\frac{m}{z}} \left(2-\frac{m}{z}\right) \left[ A_1 + \frac{m+1}{z} \delta \right]\\
    & \propto g_3 \left(\frac{m+1}{z} \right), 
 \end{align*}
 where the symbol $\propto$ denotes proportionality,  the function $ g_3(\cdot)$ is
    \[ g_3(\ell) =\ee^{-\ell + \frac{1}{z}} \left( 2+\frac{1}{z} - \ell \right) \left( A_1 + \delta \ell  \right), \quad \ell \in \left[ \frac{1}{z},2 \right], \]
  and its derivative is 
\begin{align}
    g_3'(\ell) & = \ee^{-\ell + \frac{1}{z}} \left[ \delta \ell^{2} - \left( \delta\frac{4 z + 1}{z} - A_1 \right) \ell - \left( 3+\frac{1}{z} \right) A_1 + \left(2 + \frac{1}{z} \right) \delta\right].
\end{align}
The sign of $g'_3$ is determined solely by the quadratic polynomial inside the square brackets, evaluating at endpoints gives:
\begin{align}
    g_3'(2) = \ee^{-2+\frac{1}{z}}\left[ -\left( 2+\frac{1}{z} \right) \delta - \left( 1+\frac{1}{z} \right) A_1 \right]< 0, \quad
    g_3' \left(\frac{1}{z} \right) = -3 A_1 + \left( 2-\frac{3}{z} \right) \delta.
\end{align}

If $2 \delta \le 3 A_1$, we obtain $g_3'(\ell) < 0$ for all $\ell \in \left[\frac{1}{z}, 2\right]$. Hence $g_3(\ell)$ is strictly decreasing on this interval,  and
    \[ g_3(\ell) \leq g_3 \left( \frac{1}{z} \right) = 2\left(A_1 + \frac{\delta}{z} \right)  \rightarrow 2 A_1 \, (z \rightarrow +\infty).\]

If $2 \delta > 3 A_1$, let $ z \ge \lceil \frac{3 \delta}{ 2 \delta - 3 A_1} \mathbf{1}_{\{3 A_1 < 2 \delta\}} \rceil + 1$, where $\lceil (\cdot) \rceil$ denotes the ceiling of $(\cdot)$, then $g_3'(\frac{1}{z})>0$, $g_3$ attains its maximum when $g_3'(\ell) = 0$, it follows that 
\begin{align}
    g_3(\ell) \le \ee^{-\ell + \frac{1}{z}} \left[ \left(2+\frac{1}{z} \right)\delta -A_1 -2 \delta \ell\right]  \le   \left(2 - \frac{1}{z} \right)\delta -A_1  \rightarrow (2 \delta - A_1) \, (z \rightarrow +\infty).
\end{align}

Using Condition (iii), we obtain the uniform bound
 \[
(\kappa-1) h_1 f_L(b^*) \frac{\partial^2 b^*}{\partial \theta_1^2} \leq  \sup_{\ell \in I_L} f_L(\ell) \delta (\kappa-1)^2 k_2 [(2 A_1) \vee (2 \delta -A_1)] \leq k_1.
\]
This implies $I_5\leq 0.$ 

\noindent Consequently, 
\[\frac{\partial ^2 J^1}{\partial \theta_1^2} = \eta (I_4 + I_5) < 0,\]
which is the desired result.
\end{proof}

Based on Lemmas~\ref{f1<} and~\ref{f1>}, we  establish the following theorem concerning the existence and uniqueness of the maximizer of $J^i,i \in \Ic$.

\begin{theorem}[Existence and Uniqueness of the Maximizer of $J^i, i\in \Ic$]
\label{thm:f1}
Let $i \in \Ic, \theta_{3-i} \in \Theta$, then the function $J^i$ admits a unique maximizer if the following conditions hold:
\begin{itemize}
\item[(i)] \quad 
$\displaystyle 
\sup_{\ell \in I_L}
\frac{f'_L(\ell)}{f_L(\ell)}
\in \Big[-\frac{\ee k_1}{2 \ee-1},  \frac{k_1}{2}\Big]$ with $I_L := \left[\delta   (\kappa-1)  \Eb[L], \,  \frac{\delta (\kappa-1) M}{\kappa} \right]$;

\item[(ii)] \quad 
$\displaystyle 
\sup_{\ell \in I_L} f_L(\ell) \le \mm$;

\item[(iii)] \quad 
$\displaystyle 
(1-\delta) \frac{M}{\kappa} k_1 \le A$,
\end{itemize}
where $f_L(\ell)$ denotes the density function of the loss $L$ on the positive support, $\mm = \mm_1 \mathbf{1}_{\{i =1\}} + \mm_2 \mathbf{1}_{\{i =2\}},\, A = A_1 \mathbf{1}_{\{i =1\}} + A_2 \mathbf{1}_{\{i =2\}}$,   and $\mm_1,\mm_2,A_1$ and $A_2$ are defined in~\eqref{eq:m}.
\end{theorem}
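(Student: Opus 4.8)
The plan is to show that for each fixed $\theta_{3-i}\in\Theta$ the function $\theta_i\mapsto J^i(\theta_i;\theta_{3-i})$ is \emph{single-peaked} on the compact interval $\Theta=[\Eb[L],M/\kappa]$ — strictly increasing up to a point and strictly decreasing afterwards, with the peak possibly at an endpoint — since single-peakedness together with continuity yields existence and uniqueness of the maximizer at once. First I would dispose of $i=2$ by relabelling the two companies, which swaps $J^1\leftrightarrow J^2$ and carries $k_2\mapsto 1-k_2$, $\mm_1\mapsto\mm_2$, $A_1\mapsto A_2$, so that the analogues of Lemmas~\ref{f1<} and~\ref{f1>} hold for $J^2$; hence it suffices to treat $i=1$. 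Fix $\theta_2\in\Theta$ and write $g(\theta_1):=\partial_{\theta_1}J^1(\theta_1;\theta_2)$. Since the choice probability $\eta$ of \eqref{k1k2} is $C^1$ on $\{\theta_1<\theta_2\}$ and on $\{\theta_1>\theta_2\}$ but only continuous (with a kink) at $\theta_1=\theta_2$, and $f_L\in C^1$, the function $J^1(\cdot;\theta_2)$ is $C^2$ on each of the subintervals $[\Eb[L],\theta_2]$ and $[\theta_2,M/\kappa]$.

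The engine of the argument is the elementary fact that a $C^1$ function on a closed interval whose every critical point is a strict local maximum has at most one critical point and is single-peaked on that interval: between two strict local maxima the function would attain its minimum over the connecting subinterval at an interior point, which would then be simultaneously a critical point — hence a strict local maximum by hypothesis — and a local minimum, a contradiction; once there is at most one critical point, $g$ changes sign at most once, from $+$ to $-$. Lemma~\ref{f1<} (which under Conditions~(i) and~(iii), with $A=A_1$, gives a strictly negative second derivative at any critical point with $\theta_1\le\theta_2$) and Lemma~\ref{f1>} (which under Conditions~(i)--(iii), with $\mm=\mm_1$, does the same for $\theta_1\ge\theta_2$) therefore imply that $J^1(\cdot;\theta_2)$ is single-peaked on each piece, say with peaks at $\alpha\in[\Eb[L],\theta_2]$ and $\beta\in[\theta_2,M/\kappa]$, and with $g>0$ strictly to the left of each peak and $g<0$ strictly to its right within the corresponding piece.

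It remains to glue the two pieces, i.e.\ to exclude the configuration $\alpha<\theta_2<\beta$ in which $J^1$ has two distinct local maxima separated by a valley at $\theta_2$. Here I would use the closed form obtained from \eqref{eq:f_obj} and the steady-state distribution, namely $J^1=\eta\,\bigl(\theta_1[a+\kappa(1-a)]-\Eb[L\mathbf{1}_{\{L>b^*\}}]\bigr)$ with $a=\Pb[L\le b^*]$ and $b^*=\delta(\kappa-1)\bigl(\eta\,\theta_1+(1-\eta)\theta_2\bigr)$ from Theorem~\ref{thm:bstar}. A short computation shows that $\partial b^*/\partial\theta_1$ is continuous at $\theta_2$, so the only discontinuity of $g$ there is inherited from the jump of $\eta'$, and
\[
g(\theta_2^{+})-g(\theta_2^{-})=k_1(1-2k_2)\Bigl(\theta_2[a+\kappa(1-a)]-\Eb[L\mathbf{1}_{\{L>b^*\}}]\Bigr),
\]
whose bracket is strictly positive because $\Eb[L\mathbf{1}_{\{L>b^*\}}]<\Eb[L]\le\theta_2\le\theta_2[a+\kappa(1-a)]$. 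The configuration $\alpha<\theta_2<\beta$ forces $g(\theta_2^{-})<0<g(\theta_2^{+})$, hence a strictly positive jump, hence $k_2<1/2$; so when $k_2\ge1/2$ the configuration is impossible and $J^1(\cdot;\theta_2)$ is single-peaked on all of $\Theta$. When $k_2<1/2$ — i.e.\ when Company~1 is the disadvantaged firm — I would argue from the explicit form of $g$ on the right piece $[\theta_2,M/\kappa]$ (where $\eta=k_2\,\ee^{-k_1(\theta_1-\theta_2)}$, $\eta'=-k_1\eta$) that, under Conditions~(i)--(iii) controlling $f_L$ and $f_L'/f_L$, the upward jump of $g$ at $\theta_2$ is dominated by $|g(\theta_2^{-})|$, so $g(\theta_2^{-})\le0$ forces $g\le0$ throughout $(\theta_2,M/\kappa]$; this makes $\beta=\theta_2$ and again leaves a single peak.

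The main obstacle is precisely this last step in the disadvantaged-firm case: Lemmas~\ref{f1<}--\ref{f1>} already pin down the interior curvature on each piece, but the kink of $\eta$ creates a genuine upward jump in $g$ at $\theta_2$, and one must quantitatively dominate it (or otherwise sign $g$ on the right piece) using the smallness of the price-sensitivity $k_1$ and the flatness of the loss density encoded in Conditions~(i)--(iii). Everything else reduces to routine real analysis, the reduction by symmetry, or a direct appeal to the two lemmas.
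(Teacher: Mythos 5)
Your core strategy coincides with the paper's own proof: reduce $i=2$ to $i=1$ by the company relabelling $k_2\mapsto 1-k_2$, $\mm_1\mapsto\mm_2$, $A_1\mapsto A_2$, and then use Lemmas~\ref{f1<} and~\ref{f1>} together with the elementary observation that a function all of whose critical points are strict local maxima has at most one critical point, hence is single-peaked. The paper's proof is exactly this and nothing more: ``if $\partial J^1/\partial\theta_1$ admits a zero, it is unique and a maximum; otherwise $J^1$ is monotone.''

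Where you genuinely diverge is in refusing to treat $\partial_{\theta_1}J^1$ as continuous across $\theta_1=\theta_2$. You are right that $\eta$ has a kink there (one-sided derivatives $-k_1(1-k_2)$ and $-k_1k_2$), that the two lemmas therefore only deliver single-peakedness on each half of $\Theta$ separately, and that the configuration ``peak on the left piece, valley at the kink, peak on the right piece'' is not excluded by the lemmas alone. Your jump computation $g(\theta_2^{+})-g(\theta_2^{-})=k_1(1-2k_2)\,\Psi$ with $\Psi=\theta_2[a+\kappa(1-a)]-\Eb[L\mathbf{1}_{\{L>b^*\}}]>0$ is correct (continuity of $\partial b^*/\partial\theta_1$, hence of $q_1^1$ and $\Psi'$, at the kink is easy to check since the $\eta'$ term there is multiplied by $\theta_1-\theta_2$), and it cleanly rules out the valley when $k_2\ge\tfrac12$. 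The paper's proof simply does not address this point. However, your treatment of the remaining case $k_2<\tfrac12$ (for $i=1$; equivalently $k_2>\tfrac12$ for $i=2$, so the case cannot be avoided by symmetry) is only a declaration of intent: you assert that Conditions (i)--(iii) let the upward jump be ``dominated by $|g(\theta_2^{-})|$'' without producing the estimate, and it is not obvious that the stated conditions imply it. Since uniqueness of the maximizer is exactly what is at stake when two local peaks can coexist, this is a real gap in your argument as written --- it happens to sit precisely at the step where you go beyond what the paper itself proves, but the theorem you are asked to prove claims uniqueness unconditionally, so you must either carry out that domination estimate or find another mechanism (e.g., a direct sign analysis of $g$ on $[\theta_2, M/\kappa]$ when $g(\theta_2^-)\le 0$) to exclude the two-peak configuration.
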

\begin{proof}
    If $i = 1$, based on  Lemma~\ref{f1<} and Lemma~\ref{f1>}, if $\frac{\partial J^1}{\partial \theta_1}$ admits a zero,  then it is unique and necessarily corresponds to a point of maximum. If $\frac{\partial J^1}{\partial \theta_1}$ has no zero,  then $J^1(\theta_1;\theta_2)$ is monotone. In this case,  the maximizer of $J^1$ over $\Theta$ exists and is unique.  In summary,  $J^1$ admits a unique maximizer over $\Theta$.

    Otherwise, if $i = 2$, we present prove the existence and uniqueness of $J^2$. The proof follows a symmetric argument to that of $i = 1$, with the roles of Company~1 and Company~2 interchanged. Corresponding adjustments are made to the function $f_L(\ell)$ and the relevant parameters (see the expression for $A$ and $\mm$), while the overall logical structure remains unchanged. For the sake of brevity, the detailed derivation is omitted.
\end{proof}

We are now ready to complete the proof of Theorem~\ref{thm:c_op}  considering all possible configurations of the first-order condition.

\begin{proof}[\bf Proof of Theorem~\ref{thm:c_op}]
 We define the mapping $\mathcal G:\Theta^2 \mapsto \Theta^2$ by
    \[\forall (\theta_1, \theta_2) \in \Theta^2, \, \ \mathcal G(\theta_1, \theta_2) = (\bar{\theta_1}, \bar{\theta_2}), \]
    where $(\bar{\theta_1}, \bar{\theta_2})$ satisfies \[
    \bar{\theta_1} = \arg\sup\limits_{\theta_1 \in \Theta} J^1(\theta_1;\theta_2), \,  \bar{\theta_2} = \arg\sup\limits_{\theta_2 \in \Theta}J^2(\theta_2;\theta_1).
    \]
    By Theorem~\ref{thm:f1},  for any $\theta_i\in\Theta$,  $i \in \Ic$, the best response $\bar \theta_j(\theta_i)$ exists, is unique, and depends continuously on $\theta_i$.
    Hence,  $\mathcal G $ is a continuous self-mapping on the compact convex set $ \Theta^2$. Thus, the existence of a fixed point follows from Brouwer’s Fixed Point Theorem.
\end{proof}

\subsection{Proof of Proposition \ref{p4.1}}

\begin{proof}
We first consider the case $k_2 \ge \tfrac{1}{2}$ and argue by contradiction.
Suppose that an equilibrium premium exists with $\theta_1 < \theta_2$.
Then the first-order conditions imply
\[
\frac{\partial J^1}{\partial \theta_1}(\theta_1;\theta_2) \le 0
\quad \text{and} \quad
\frac{\partial J^2}{\partial \theta_2}(\theta_2;\theta_1) \ge 0, 
\]
that is, 
\begin{align}
    &\eta \bigl[a + \kappa (1-a) + (\theta_1 (1-\kappa) + b^*) q_1^1 \bigr] 
    \le k_1 (1-\eta) \bigl[ \theta_1 (a + \kappa (1-a)) - \Eb[L \mathbf{1}_{\{L>b^*\}}] \bigr], 
    \label{eq:ineq1}\\
    & a + \kappa (1-a) + (\theta_2 (1-\kappa) + b^*) q_2^2 
    \ge k_1 \bigl[ \theta_2 (a + \kappa (1-a)) - \mathbb{E}[L \mathbf{1}_{\{L>b^*\}}] \bigr].
    \label{eq:ineq2}
\end{align}
Multiplying Inequality~\eqref{eq:ineq1} by $(1-\eta)^{-1}$ and subtracting
Inequality~\eqref{eq:ineq2} yield
\begin{align}
    \mathcal H :=\;&
    \left[ \frac{2 \eta - 1}{1 - \eta}-k_1 (\theta_1-\theta_2) \right][a + \kappa (1-a)]
    \nonumber\\
    &+\frac{\eta}{1-\eta} [\theta_1 (1-\kappa) + b^*] q_1^1
    - [\theta_2 (1-\kappa) + b^*] q_2^2
    \le 0,
    \label{F}
\end{align}
where
\begin{align*}
  q_1^1 &= f_L(b^*) \delta (\kappa-1) \bigl[\eta + k_1 (\theta_2-\theta_1) (1-\eta)\bigr] > 0, \\
  q_2^2 &= f_L(b^*) \delta (\kappa-1) (1-\eta) \bigl[1 + k_1(\theta_1-\theta_2) \bigr].
\end{align*}
Because $q_1^1 + q_2^2 = f_L(b^*) \delta (\kappa-1)$, it follows that $|q_1^1| > |q_2^2|$.

\medskip
\noindent
\textbf{Case 1:} \ $q_2^2 \le 0$.
Using $q_2^2 > -q_1^1$,  we obtain
\begin{align*}
    \frac{\eta}{1 -\eta} [\theta_1 (1-\kappa)& + b^*] q_1^1
    - [\theta_2 (1-\kappa) + b^*] q_2^2
    > \frac{q_1^1}{1-\eta} \frac{\delta-1}{\delta} b^*\\ &
    \ge \frac{\eta} {1-\eta} (\kappa-1)(\delta-1) f_L(b^*) b^*
    \ge -\frac{\eta}{1-\eta} [a + \kappa (1-a)].
\end{align*}
Combining this with $k_1(\theta_2-\theta_1)>1$,  we have
\[
\mathcal H > (1-\eta) \left[ k_1 (\theta_2-\theta_1) - 1 \right] \left[a+\kappa(1-a) \right] > 0, 
\]
which contradicts Inequality~\eqref{F}.

\medskip
\noindent
\textbf{Case 2:} \ $q_2^2 > 0$.
If $\theta_1 (1-\kappa) + b^* \ge 0$,  then it is immediate that $\mathcal H >0$.
It remains to consider the case $\theta_1 (1-\kappa) + b^* < 0$.
As $ b^* - (\kappa-1) \theta_1 > (\delta-1)(\kappa-1) \theta_1 \ge \frac{\delta-1}{\delta} b^*$, we have
\begin{align*}
    \frac{\eta}{1-\eta} [\theta_1(1 & -\kappa) + b^*] q_1^1 -[\theta_2 (1-\kappa) + b^*] q_2^2\\
    >&\frac{\eta}{1-\eta} [\theta_1 (1-\kappa) + b^*] q_1^1 - [\theta_1 (1-\kappa) + b^*] q_2^2\\
   >&f_L(b^*) b^* (\delta-1) (\kappa-1) \left[ \frac{2 \eta-1}{1-\eta} - k_1 (\theta_1-\theta_2) \right]\\
   \ge& - [a+\kappa (1-a)] \left[\frac{2 \eta-1}{1-\eta} - k_1 (\theta_1-\theta_2) \right] \Longleftrightarrow \mathcal H  >0, 
\end{align*}
which again implies $\mathcal H >0$,  contradicting~\eqref{F}. Consequently,  when $k_2 \ge \tfrac{1}{2}$,  Inequality~\eqref{F} cannot hold,  and thus
$\theta_1^* \ge \theta_2^*$. When $k_2 \le \tfrac{1}{2}$,  a symmetric argument between Company~1 and Company~2 yields  $\theta_1^* \le \theta_2^*$.
If $k_2=\tfrac{1}{2}$,  both  the inequalities hold,  implying $\theta_1^* = \theta_2^*$.
\end{proof}

\section{An Example}
\label{sec:example}

In the main paper, we show in Theorem~\ref{thm:c_op} that a Nash equilibrium premium strategy exists when a set of technical conditions holds. However, despite discussions in Remark~\ref{rem:cond}, it is not clear that all those conditions can hold simultaneously in a reasonable setup. To provide a (positive) answer to this question, we construct an example in which all conditions in Theorem~\ref{thm:c_op} are satisfied.

\begin{example}
	\label{exm:condition}
	Assume that the positive part of the loss follows a Gamma distribution, $L|L>0 \sim Gamma(\alpha, \lambda)$, which belongs to the exponential family of distributions and is capable of capturing the fat tails commonly observed in insurance data (see, e.g., Section 17.3.1 in \cite{frees2009regression}). As a result, the cdf of the loss random variable $L$ in \eqref{F_L} is given by 
    \begin{align*}
    	F_L(x) = p_0 + (1 - p_0 )\int_0^x    \frac{\lambda^\alpha}{\Gamma(\alpha)}\ell^{\alpha-1}\ee^{-\lambda \ell}  \dd \ell, \quad  \text{with } \Gamma(\alpha) = \int^{\infty}_{0}t^{\alpha -1}\ee^{-t}dt.
    \end{align*}
    Because the underlying losses are from non-life risks, we assume that $p_0 > 0.5$ (for example, often more than 90\% auto insurance policies do not incur losses over a unit period).

   Regarding the upper bound $M$ on the BMS premiums, we assume a conservative condition below:
   \[M \le  3\Eb[L].\]
   Empirical evidence from property and casualty insurance markets shows that typical loss ratios $LR = \frac{\text{incurred losses}}{\text{earned premiums}}$ range from $0.6$ to $0.8$, with combined ratios close to $1$ (see \cite{allstate2023} and \cite{swissre2025}). Thus, the above bound holds for most insurance lines in practice.
   \end{example} 

Applying Theorem~\ref{thm:c_op} to the setup in Example~\ref{exm:condition} yields a refined result below.

 \begin{corollary}
	\label{cor:exist}
	Let Assumption~\ref{asu:2-class} hold and further assume the specifications as in Example~\ref{exm:condition}. There exists an equilibrium premium strategy $(\theta_1^*, \theta_2^*)$ if the following conditions hold:
 \begin{itemize}
        \item [(i)] \quad
        $\displaystyle
         \frac{k_1}{\lambda} \ge \max \left\{2-\frac{1}{\ee}, 2 \left[ \frac{\alpha-1}{\alpha \delta (\kappa-1) (1-p_0)}-1 \right],\,3(\kappa-1) \right\}$;
       \item [(ii)] \quad
       $\displaystyle
       \frac{k_1}{\lambda} \le \frac{\kappa}{3 \alpha (1-p_0) (1-\delta)}
    (A_1 \wedge A_2),$
    \end{itemize}
    \label{c4.1}
    where $A_1$ and $A_2$ are defined in \eqref{eq:m}. 
\end{corollary}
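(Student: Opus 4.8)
The plan is to specialize the three abstract conditions (i)–(iii) of Theorem~\ref{thm:c_op} to the Gamma specification of Example~\ref{exm:condition} and show that each is implied by one of the two inequalities in Corollary~\ref{cor:exist}. For $L\mid L>0 \sim Gamma(\alpha,\lambda)$ with $\alpha>1$, the density on the positive support is $f_L(\ell)=(1-p_0)\,\frac{\lambda^\alpha}{\Gamma(\alpha)}\ell^{\alpha-1}\ee^{-\lambda\ell}$, so the log-derivative is the explicit rational function $\frac{f_L'(\ell)}{f_L(\ell)}=\frac{\alpha-1}{\ell}-\lambda$, which is strictly decreasing on $(0,\infty)$. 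Hence on the interval $I_L=[\delta(\kappa-1)\Eb[L],\,\delta(\kappa-1)M/\kappa]$ its supremum is attained at the left endpoint $\ell_-:=\delta(\kappa-1)\Eb[L]=\delta(\kappa-1)\alpha/\lambda$ (using $\Eb[L]=(1-p_0)\alpha/\lambda$, so that $\ell_-=\delta(\kappa-1)(1-p_0)\alpha/\lambda$). First I would compute $\sup_{\ell\in I_L}\frac{f_L'(\ell)}{f_L(\ell)}=\frac{\alpha-1}{\ell_-}-\lambda$ and show that Condition~(i) of Theorem~\ref{thm:c_op}, namely membership in $[-\ee k_1/(2\ee-1),\,k_1/2]$, reduces to a two-sided bound on $k_1/\lambda$: the upper bound $\frac{\alpha-1}{\ell_-}-\lambda\le k_1/2$ rearranges to $\frac{k_1}{\lambda}\ge 2\big[\frac{\alpha-1}{\alpha\delta(\kappa-1)(1-p_0)}-1\big]$, which is exactly the second term in the max of Corollary~\ref{cor:exist}(i); the lower bound $-\ee k_1/(2\ee-1)\le \frac{\alpha-1}{\ell_-}-\lambda$ is automatic once the right-hand quantity is nonnegative (or is a mild further consequence when it is negative), so it imposes nothing beyond what is already required.

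Next I would handle Condition~(ii), $\sup_{\ell\in I_L}f_L(\ell)\le \mm_1\wedge\mm_2$. Since $f_L$ on $I_L$ is unimodal with mode at $(\alpha-1)/\lambda$, I would bound $\sup_{\ell\in I_L}f_L(\ell)$ by the peak value $f_L((\alpha-1)/\lambda)$ or, more simply and in line with Remark~\ref{rem:cond}, by $(1-p_0)$ times the peak of the standardized Gamma density; the cleanest route is to note $\sup_\ell f_L(\ell)\le (1-p_0)\lambda\cdot C_\alpha$ for an explicit constant $C_\alpha$ depending only on $\alpha$, and then verify this is $\le \mm_1\wedge\mm_2$. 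Unwinding the definitions of $\mm_1,\mm_2$ in \eqref{eq:m}, this becomes a lower bound on $k_1/\lambda$ of the form $\frac{k_1}{\lambda}\ge \text{const}\cdot\delta(\kappa-1)^2 (1-p_0)[\ldots]$. A careful but routine manipulation should show this is dominated by the term $3(\kappa-1)$ (or absorbed into the other terms) in the max of Corollary~\ref{cor:exist}(i), using $\kappa\in(1,2)$, $p_0>1/2$, and $M\le 3\Eb[L]$. Simultaneously, Condition~(iii) of Theorem~\ref{thm:c_op}, $\frac{(1-\delta)k_1 M}{\kappa}\le A_1\wedge A_2$, combined with $M\le 3\Eb[L]=3\alpha(1-p_0)/\lambda$, yields $\frac{k_1}{\lambda}\le \frac{\kappa}{3\alpha(1-p_0)(1-\delta)}(A_1\wedge A_2)$, which is precisely Corollary~\ref{cor:exist}(ii). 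Finally, the condition $\frac{k_1}{\lambda}\ge 2-\frac1\ee$ — the first term in the max — is the one needed to guarantee that $I_L$ is genuinely an interval on which the above monotonicity/unimodality estimates are valid, i.e.\ that the mode $(\alpha-1)/\lambda$ does not fall inside $I_L$ in a way that breaks the endpoint evaluation; I would derive it from requiring $\delta(\kappa-1)M/\kappa \le (\alpha-1)/\lambda$ together with the crude bounds on $\kappa,\delta$. Once all of (i)–(iii) of Theorem~\ref{thm:c_op} are verified, the existence of $(\theta_1^*,\theta_2^*)$ follows directly from that theorem.

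The main obstacle I anticipate is the bookkeeping in Condition~(ii): the quantities $\mm_1,\mm_2,A_1,A_2$ in \eqref{eq:m} are somewhat intricate functions of $k_2$ and $\delta$, and showing $\sup_{\ell\in I_L}f_L(\ell)\le \mm_1\wedge\mm_2$ cleanly requires a good a priori bound on the Gamma peak and a careful comparison with the denominators $(2A_i)\vee(2\delta-A_i)$. I would need to check the worst case over $k_2\in(0,1)$ and confirm that the resulting lower bound on $k_1/\lambda$ is genuinely no larger than $\max\{2-\tfrac1\ee,\,2[\tfrac{\alpha-1}{\alpha\delta(\kappa-1)(1-p_0)}-1],\,3(\kappa-1)\}$; this may require splitting into the regimes $k_2\lessgtr\tfrac12$ (mirroring the asymmetry between $\mm_1$ and $\mm_2$). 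The other steps — the log-derivative computation, the monotonicity of $f_L'/f_L$, and the translation of Condition~(iii) via $M\le 3\Eb[L]$ — are straightforward.
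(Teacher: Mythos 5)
Your top-level strategy is the same as the paper's: verify conditions (i)--(iii) of Theorem~\ref{thm:c_op} under the Gamma specification. Your treatment of Theorem condition (iii) via $M\le 3\Eb[L]=3\alpha(1-p_0)/\lambda$ is exactly the paper's, and your reduction of the upper bound in Theorem condition (i) to the term $2\bigl[\tfrac{\alpha-1}{\alpha\delta(\kappa-1)(1-p_0)}-1\bigr]$ is correct. However, there are genuine gaps. First, you restrict to $\alpha>1$ throughout, but Example~\ref{exm:condition} (and the paper's own emphasis on shape parameters near or below $1$) allows $\alpha\in(0,1)$, where your key structural facts reverse: $f_L'/f_L=\tfrac{\alpha-1}{\ell}-\lambda$ is then \emph{increasing}, so the supremum over $I_L$ sits at the right endpoint, and $f_L$ is decreasing on $(0,\infty)$ so $\sup_{I_L}f_L=f_L(\delta(\kappa-1)\Eb[L])$. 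The paper handles this case separately using Kershaw's inequality $\Gamma(\alpha)>(2/3)^{1-\alpha}/\alpha$ together with the $3(\kappa-1)$ term of the max; your proposal has no counterpart.

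Second, the verification of $\sup_{I_L}f_L\le \mm_1\wedge\mm_2$, which you defer to ``a careful but routine manipulation,'' is the crux and requires a concrete tool you do not name: for $\alpha\ge 1$ the paper bounds the Gamma peak via the Stirling-type inequality $\Gamma(\alpha)>\sqrt{2\pi}\,\alpha^{\alpha-1}\ee^{-\alpha}$, obtaining $\sup_{I_L}f_L<\tfrac{\lambda\ee}{2\sqrt{2\pi}}$ (using $1-p_0<1/2$), and then closes the argument with $\tfrac{\lambda\ee}{2\sqrt{2\pi}}<\tfrac{k_1}{2}$ --- which is supplied by the $2-\tfrac1\ee$ term, not by $3(\kappa-1)$ as you guess. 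Relatedly, you misidentify the role of $2-\tfrac1\ee$: it is not there to keep the Gamma mode outside $I_L$; it enforces the lower end of the interval in Theorem condition (i) (for $\alpha\ge1$ one has $\sup_{I_L}f_L'/f_L\ge-\lambda$, and $-\lambda\ge-\tfrac{\ee k_1}{2\ee-1}$ is exactly $k_1/\lambda\ge 2-\tfrac1\ee$), and it simultaneously powers the density bound just described. Your claim that the lower end of that interval ``imposes nothing beyond what is already required'' is therefore not right: it is precisely where the first term of the max comes from.
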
 

 \begin{proof}
 First, a calculation confirms that Condition (i) of Corollary~\ref{c4.1} implies Condition (i) of Theorem~\ref{thm:c_op}.
    
    We now verify Condition (ii) of Theorem~\ref{thm:c_op}. 
    If $\alpha \geq 1$, using the bound $\Gamma(\alpha) >  \sqrt{2 \pi} \alpha^{\alpha-1} \ee^{-\alpha}$, we obtain \[
     f_L(\ell) \leq f_L \left(\frac{\alpha-1}{\lambda} \right) < \frac{\lambda \ee}{\sqrt{2 \pi}} \left( 1-\frac{1}{\alpha} \right)^{\alpha-1} (1-p_0). \]
    Since  $\left( 1-\frac{1}{\alpha} \right)^{\alpha-1} < 1$ and $ 1-p_0 <0.5$, it follows that 
    \[
    \sup_{\ell \in I_L} f_L(\ell) < \frac{\lambda e}{2 \sqrt{2 \pi}} < \frac{k_1}{2} < \mm_1 \wedge \mm_2, 
    \]
    where $\mm_1$ and $\mm_2$ are defined by \eqref{eq:m}. That is,  Condition (ii) in Theorem~\ref{thm:c_op}  holds.

If \(\alpha \in (0, 1)\), applying  the Kershaw's inequality, we have
    \begin{align}
    \label{Ker inequality}
    \Gamma(\alpha) > \left(\frac{2}{3}\right)^{1-\alpha} \frac{1}{\alpha},
     \end{align}
     which, together with Condition (iii) of Corollary~\ref{c4.1}, implies
\begin{align}
    \sup_{\ell \in I_L} f_{L} (\ell)   = f_{L} (\delta (\kappa-1) \Eb[L]) & < 
    \lambda \left( \frac{2 \alpha}{1+\alpha} \right)^{\alpha} \frac{1}{\delta (\kappa-1)} < \frac{k_1}{3 \delta (\kappa-1)^2} < \mm_1 \wedge \mm_2. 
\end{align}
Condition (ii) of Theorem~\ref{thm:c_op} is again satisfied when $\alpha \in (0,1)$.

    Finally,  we examine Condition (iii) of Theorem~\ref{thm:c_op}. Condition (ii) of Corollary~\ref{c4.1} implies
    \begin{align}
        M \le 3 (1-p_0)\frac{\alpha}{\lambda} \le \frac{\kappa}{(1-\delta) k_1} (A_1 \wedge A_2).
    \end{align}
As such, Condition (iii) of Theorem~\ref{thm:c_op} is satisfied. Thus, because all requirements of Theorem~\ref{thm:c_op} hold, we apply it to conclude that the equilibrium per Definition~\ref{def_c} exists.
\end{proof}

We discuss the conditions in Corollary~\ref{cor:exist} as follows. Note that if the parameters satisfy
    \[1-p_0 \ge 0.095,\,3 (1-\delta) < k_2 < 3 \delta -2 , \, 1.5 > \kappa >1.3,\, \alpha \le 1.05< \frac{2 \ee}{2 \ee-(4 \ee-1) \delta (\kappa-1) (1-p_0)}, \]
   then the three conditions in Corollary~\ref{c4.1} hold if the following conditions are satisfied: 
    \begin{align}
        2-\frac{1}{\ee} < \frac{k_1}{\lambda} & < 2 \le \frac{1}{1-p_0} < \frac{\kappa}{3 \alpha (1-p_0) (1-\delta)}
    \min \left\{ \frac{(1-k_2) (2-\delta (1- k_2))}{1+k_2}, \frac{k_2 (2-\delta k_2)}{2-k_2} \right\}.
    \end{align} 

We mention that the requirement on the loss distribution is practically plausible. For example, in the study of \cite{kmetic1993parametric} on parametric modeling of medical insurance claims, nearly half of the fitted Gamma distributions have shape parameters \(\alpha\) below 1.05, while the rate parameter $\lambda$ falls within the interval $(0.001,  0.01)$.
We close this section with an important remark: for the setup in Example~\ref{exm:condition}, the region of the parameters under which an equilibrium $(\theta_1^*, \theta_2^*)$ exists as in Corollary~\ref{cor:exist} is non-empty. 
\end{document}